\let\@afterindentfalse\@afterindenttrue
\newtheorem{theorem}{Theorem}
\newtheorem{lemma}{Lemma}
\newtheorem{corollary}{Corollary}
\newtheorem{remark}{Remark}
\newtheorem{conjecture}{Conjecture}
\newtheorem{algorithm}{Algorithm}
\newcommand{\R}{\mathbb{R}}
\newcommand{\C}{\mathbb{C}}
\newcommand{\K}{\mathbb{K}}
\newcommand{\Ran}{\mathrm{Ran}}
\newcommand{\Q}{\mathcal{Q}}
\newcommand{\la}{\lambda}
\newcommand{\dint}{\,\mathrm{d}}
\newcommand{\Ga}{\Gamma}
\newcommand{\E}{\mathop{\mathrm{E}}\nolimits}
\newcommand{\e}{\mathop{\mathrm{e}}\nolimits}
\newcommand{\M}[1]{\mathcal{M}_{#1}}
\newcommand{\MR}[1]{\mathcal{M}^{\mathbb{R}}_{#1}}
\newcommand{\MC}[1]{\mathcal{M}^{\mathbb{C}}_{#1}}
\newcommand{\Tr}{\mathop{\mathrm{Tr}}\nolimits}
\newcommand{\Span}{\mathop{\mathrm{Span}}\nolimits}
\newcommand{\dg}{\mathop{\mathrm{Diag}}\nolimits}
\title{
Volume of the space of qubit channels and some new
results about the distribution of the quantum Dobrushin
coefficient\thanks{
quantum channel, volume, contraction coefficient;
MSC2010: 81P16, 81P45, 94A17
}}
\author{Attila Lovas\thanks{lovas@math.bme.hu}, 
        Attila Andai\\
	Department for Mathematical Analysis,\\
	Budapest University of Technology and Economics,\\
	H-1521 Budapest XI. Stoczek u. 2, Hungary
}
\date{\today}
\begin{document}

\maketitle

\begin{abstract}
The simplest building blocks for quantum computations are
  the qbit-qbit quantum channels.
In this paper we analyse the structure of these channels
  via their Choi representation.
The restriction of a quantum channel to the space of classical
  states (i.e. probability distributions) is called the 
  underlying classical channel.
The structure of quantum channels over a fixed classical channel is studied,
  the volume of general and unital qubit channels over real and complex state 
  spaces with respect to the Lebesgue measure is computed and explicit 
  formulas are presented for the distribution of the volume of quantum 
  channels over given classical channels.
Moreover an algorithm is presented to generate uniformly distributed channels 
  with respect to the Lebesgue measure, which enables further studies.
With this algorithm the distribution of trace-distance contraction
  coefficient (Dobrushin) is investigated numerically by Monte-Carlo 
  simulations, which leads to some conjectures and points out the
  strange behaviour of the real state space.
\end{abstract}

\section*{Introduction}

In quantum information theory, a qubit is the quantum analogue 
  of the classical bit. 
A qubit can be represented by a $2\times 2$ 
  self-adjoint positive semidefinite matrix with trace one
\cite{NielsenChuang,PetzQinf,RuskaiSzarekWerner}.
The space of qubits with real entries is
  denoted by $\MR{2}$ and with complex entries
  by $\MC{2}$ respectively. If we do not want to
  emphasise the underlying field, then we just write $\M{2}$.
A linear map $Q:\M{2}\to \M{2}$ is called a qubit channel if
  it is a completely positive and trace preserving (CPT) map \cite{PetzQinf}. 
A qubit channel is said to be unital if it preserves the identity.
Choi has published a tractable representation for completely positive linear 
  maps \cite{Choi}.
To a linear map $Q:\K^{2\times 2}\to \K^{2\times 2}$ 
($\K = \R,\C$) a block matrix
\begin{equation}\label{eq:blk}
\left(
\begin{array}{cc}
 Q_{11} & Q_{12} \\ 
 Q_{21} & Q_{22}
\end{array}
\right)
\quad
 Q_{11},Q_{12},Q_{21},Q_{22}\in\K^{2\times 2}
\end{equation}
  is associated, which is called the \emph{Choi matrix}, 
  such that the action of $Q$ is given by
\begin{equation*}
 \left(
 \begin{array}{cc}
  a & b\\
  c & d
 \end{array}
 \right)\mapsto aQ_{11} + bQ_{12} + cQ_{21} + dQ_{22}.
\end{equation*}
Choi's theorem states that the linear map $Q:\K^{2\times 2}\to\K^{2\times 2}$
  is completely positive if and only if its Choi matrix is positive definite
  \cite{Choi}. 
Hereafter, we will use the same symbol for the qubit channel and its 
  Choi matrix. 
Let 
\begin{equation*}
Q = \left(
\begin{array}{cc}
 Q_{11} & Q_{12} \\ 
 Q_{21} & Q_{22}
\end{array}
\right)
\end{equation*}
  be a qubit channel and we define the \emph{underlying classical channel} as
  the restriction of $Q$ to the space of classical bits 
  (i.e. diagonal matrices). 
The following Markov chain transition matrix can be associated to the 
  underlying channel of the qubit channel $Q$
\begin{equation*}
 P =
 \left(
 \begin{array}{c}
  \dg (Q_{11}) \\
  \dg (Q_{22})
 \end{array}
 \right),
\end{equation*}
  where $\dg (Q_{ii})$ denotes the diagonal of the submatrix $Q_{ii}$ 
  in a row vector.

Like many other quantities of interest in quantum information
  theory the trace distance between states contracts 
  under the action of quantum channels. When $Q$ is a CPT map, 
  we can define the trace-distance contraction coefficient as
\begin{equation*}
 \eta^{\Tr} (Q) = \sup
 \left\lbrace
 \frac{\Tr{|Q(\rho)-Q(\sigma)|}}{\Tr{|\rho-\sigma|}} :
 \rho,\sigma\in\M{2}
 \right\rbrace
\end{equation*}
  which describes the maximal contraction under $Q$. 
This can be regarded as the quantum analogue of the Dobrushin coefficient
  of ergodicity \cite{Dobrushin} and has important applications to
  the problem of mixing time bounds of (quantum) Markov processes, 
  as demonstrated in e.g, \cite{Kemperman,Cohen,Temme,Kastoryano}.
To compute the volume of qubit channels and their distributions
  over classical channels, we use the strategy that was applied by 
  A. Andai to compute the volume of the quantum mechanical state 
  space over $n$-dimensional real, complex and quaternionic Hilbert 
  spaces with respect to the canonical Euclidean measure \cite{AndaiVol}.

The paper is organized as follows.
In the first section we fix the notations for further computations and we   
  mention some elementary lemmas which will be used in the sequel.
In Section 2, the volume of general and unital qubit channels over real and  
  complex state spaces with respect to the canonical Euclidean measure
  are computed and explicit formulas are given for the distribution of the
  volume over classical channels.
Section 3 deals with the distribution of the trace-distance contraction 
  coefficient. 
Cumulative distribution function of $\eta^{\Tr}$ was calculated 
  by Monte-Carlo method on the whole space. 
Supremum of $\eta^{\Tr}$ over a fixed classical channel was calculated 
  explicitly.
As to the infimum of $\eta^{\Tr}$ over a fixed classical channel
  we conjecture that it coincides with the trace-distance contraction
  coefficient of the considered classical channel.
Our conjecture was been confirmed by numerical simulations for unital channels. 
A kind of anomaly observed in the behaviour of $\eta^{\Tr}$ over a fixed  
  classical channel in case of real unital channels.

\section{Basic lemmas and notations}

The following lemmas will be our main tools, we will use them without 
  mentioning, and we also introduce some notations which will be used in the 
  sequel.

The first four lemmas are elementary propositions in linear algebra.
For an $n\times n$ matrix $A$ we set $A_{i}$ to be the left upper $i\times i$ 
  submatrix of $A$, where $i=1,\dots,n$.

\begin{lemma}
The $n\times n$ self-adjoint matrix $A$ is positive definite if and only if 
  the inequality $\det(A_{i})>0$ holds for every $i=1,\dots,n$.
\end{lemma}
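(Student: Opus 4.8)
The statement is Sylvester's criterion, and the plan is to prove the two implications separately, handling the nontrivial direction by induction on $n$.

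For the forward direction, suppose $A$ is positive definite. For each $i$, the leading submatrix $A_i$ represents the restriction of the quadratic form $x\mapsto x^{*}Ax$ to those vectors whose last $n-i$ coordinates vanish; hence $A_i$ is itself positive definite. A positive definite matrix has only positive eigenvalues, so its determinant, being the product of these eigenvalues, is positive. This gives $\det(A_i)>0$ for every $i$, and this direction requires no induction.

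For the converse I would induct on $n$. The base case $n=1$ is immediate, since $\det(A_1)>0$ says precisely that the single entry is positive. For the inductive step, assume the claim in dimension $n-1$ and suppose $\det(A_i)>0$ for all $i$. The leading minors of $A_{n-1}$ are exactly $\det(A_1),\dots,\det(A_{n-1})$, so by the inductive hypothesis $A_{n-1}$ is positive definite, hence invertible. Writing $A$ in block form with corner block $A_{n-1}$, off-diagonal column $b$, and scalar diagonal entry $c$, the crucial step is the Schur-complement factorization
\begin{equation*}
A = L
\begin{pmatrix}
A_{n-1} & 0\\
0 & s
\end{pmatrix}
L^{*},
\qquad
s = c - b^{*}A_{n-1}^{-1}b,
\end{equation*}
where $L$ is lower triangular with unit diagonal and therefore invertible. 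Taking determinants yields $\det(A)=\det(A_{n-1})\cdot s$; since $\det(A)>0$ and $\det(A_{n-1})>0$ by hypothesis, we conclude $s>0$. The block-diagonal middle factor is then positive definite, and because congruence by the invertible matrix $L$ preserves positive definiteness, so is $A$.

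The main obstacle is organizing the converse: the forward direction is essentially a restriction argument, whereas the converse rests entirely on the Schur-complement identity and on the fact that congruence preserves positive definiteness. The algebra is routine once the block decomposition is fixed, so the real care needed is to verify the factorization and to check that these two lemmas apply uniformly over both $\K=\R$ and $\K=\C$, which is automatic because self-adjointness and the congruence argument are field-independent in this setting.
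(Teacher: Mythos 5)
Your proof is correct: it is the standard textbook proof of Sylvester's criterion, with the easy direction done by restricting the quadratic form to coordinate subspaces and the converse done by induction via the Schur-complement congruence $A = L\,\mathrm{diag}(A_{n-1},s)\,L^{*}$, $s=c-b^{*}A_{n-1}^{-1}b$, which is valid over both $\R$ and $\C$ exactly as you argue. Note, however, that the paper offers no proof of this lemma at all: it is listed among the ``elementary propositions in linear algebra'' that the authors use without further justification, so there is no paper argument to compare yours against --- your proposal simply supplies the omitted standard proof, and it does so without gaps.
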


\begin{lemma}
The $n\times n$ self-adjoint matrix $A$ is positive definite if and only if 
$U^\ast{AU}$ is positive definite for all unitary matrix $U$.
\end{lemma}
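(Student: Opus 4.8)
The plan is to unpack the definition of positive definiteness directly and exploit the fact that a unitary matrix is invertible, hence induces a bijection on the underlying vector space. Recall that a self-adjoint matrix $A$ is positive definite precisely when $\langle x, Ax\rangle > 0$ for every nonzero vector $x \in \K^{n}$. The whole argument reduces to tracking how this quadratic-form condition behaves under the change of variables $x \mapsto Ux$.

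For the forward implication, I would assume $A$ is positive definite and fix an arbitrary unitary $U$ together with an arbitrary nonzero vector $x$. First I would observe that $U^{\ast}AU$ is self-adjoint, since $(U^{\ast}AU)^{\ast} = U^{\ast}A^{\ast}U = U^{\ast}AU$ using $A^{\ast} = A$. Then I would compute
\begin{equation*}
 \langle x, U^{\ast}AU\, x\rangle = \langle Ux, A(Ux)\rangle.
\end{equation*}
The key step is to note that because $U$ is unitary it is invertible, so $Ux \neq 0$ whenever $x \neq 0$; positive definiteness of $A$ applied to the nonzero vector $Ux$ then yields $\langle Ux, A(Ux)\rangle > 0$. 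Since $x$ was arbitrary, $U^{\ast}AU$ is positive definite, and since $U$ was arbitrary this holds for every unitary.

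For the reverse implication I would simply specialise: the identity matrix $I$ is unitary, so the hypothesis applied with $U = I$ gives that $I^{\ast}AI = A$ is itself positive definite. This half is therefore immediate and carries no genuine content.

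I do not expect a real obstacle here, as the statement is elementary; the only point requiring even minor care is the invertibility of $U$, which guarantees that the substitution $y = Ux$ sends nonzero vectors to nonzero vectors and thus preserves the strict inequality. One could alternatively phrase the forward direction spectrally, noting that $U^{\ast}AU$ is a unitary similarity of $A$ and hence shares its (real, positive) eigenvalues, but the quadratic-form argument is shorter and avoids invoking the spectral theorem.
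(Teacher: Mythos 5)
Your proof is correct and complete: the quadratic-form computation $\langle x, U^{\ast}AU\,x\rangle = \langle Ux, A(Ux)\rangle$ together with the invertibility of $U$ handles the forward direction, and taking $U = I$ settles the converse. The paper states this lemma without proof as an elementary fact, and your argument is exactly the standard one it implicitly relies on.
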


\begin{lemma}
Assume that $A$ is an $n\times n$ self-adjoint, positive definite matrix 
  with entries $(a_{ij})_{i,j=1,\dots, n}$ and the vector $\alpha$ consists of 
  the first $(n-1)$ elements of the last column, that is 
  $\alpha=(a_{1,n},\dots,a_{n-1,n})$.
Then for the matrix $T=\det(A_{n-1}) (A_{n-1})^{-1}$ we have
\begin{equation*}
\det(A)=a_{nn}\det(A_{n-1})-\left< \alpha,T\alpha \right>.
\end{equation*}
\end{lemma}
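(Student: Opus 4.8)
The plan is to exploit the block structure of $A$ with respect to the splitting into its upper-left $(n-1)\times(n-1)$ corner and its last row and column. Writing
\begin{equation*}
A=\begin{pmatrix} A_{n-1} & \alpha \\ \alpha^{\ast} & a_{nn} \end{pmatrix},
\end{equation*}
where $\alpha$ is the column vector of the first $(n-1)$ entries of the last column and $\alpha^{\ast}$ is its conjugate transpose (which, by self-adjointness of $A$, is precisely the last row), the whole computation reduces to a determinant formula for such a $2\times 2$ block matrix.

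First I would note that $A_{n-1}$ is positive definite, hence invertible. This follows at once from Lemma 1 applied to $A$: since every leading principal minor $\det(A_{i})$ is strictly positive, in particular the leading minors of $A_{n-1}$ are the numbers $\det(A_{i})>0$ for $i=1,\dots,n-1$, so $A_{n-1}$ satisfies the same criterion and is positive definite. Invertibility of $A_{n-1}$ is exactly what makes the subsequent factorization legitimate.

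The key step is the block $LDU$ (Schur complement) factorization
\begin{equation*}
A=\begin{pmatrix} I & 0 \\ \alpha^{\ast}A_{n-1}^{-1} & 1 \end{pmatrix}\begin{pmatrix} A_{n-1} & 0 \\ 0 & a_{nn}-\alpha^{\ast}A_{n-1}^{-1}\alpha \end{pmatrix}\begin{pmatrix} I & A_{n-1}^{-1}\alpha \\ 0 & 1 \end{pmatrix},
\end{equation*}
which one verifies by direct multiplication. Taking determinants and using that the two triangular factors have determinant $1$, I obtain $\det(A)=\det(A_{n-1})\bigl(a_{nn}-\alpha^{\ast}A_{n-1}^{-1}\alpha\bigr)$. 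Distributing $\det(A_{n-1})$ over the second factor turns the subtracted term into $\alpha^{\ast}\bigl(\det(A_{n-1})A_{n-1}^{-1}\bigr)\alpha=\alpha^{\ast}T\alpha=\left<\alpha,T\alpha\right>$, which is exactly the claimed identity.

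Since every step is a direct verification, there is no genuine obstacle; the only points requiring a word of care are the invertibility of $A_{n-1}$ (handled above via Lemma 1) and matching the scalar $\alpha^{\ast}T\alpha$ with the inner product $\left<\alpha,T\alpha\right>$ under the convention that $\left<\cdot,\cdot\right>$ is conjugate-linear in the first slot, so that $\left<\alpha,T\alpha\right>=\alpha^{\ast}T\alpha$. An alternative route, should one wish to avoid the factorization, is cofactor expansion of $\det(A)$ along the last column together with Cramer's rule, recognizing the entries of $T=\det(A_{n-1})A_{n-1}^{-1}$ as the relevant cofactors of $A_{n-1}$; this yields the same formula but is computationally heavier, so I would prefer the block factorization above.
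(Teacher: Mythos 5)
Your proof is correct, but it takes a genuinely different route from the paper. The paper's own proof is a one-line remark: expand $\det(A)$ by minors with respect to the last row (and then the resulting minors along the last column), which directly produces the quadratic form $\left<\alpha,T\alpha\right>$ once one recognizes $T=\det(A_{n-1})(A_{n-1})^{-1}$ as the adjugate of $A_{n-1}$, whose entries are exactly the relevant cofactors --- this is the ``computationally heavier'' alternative you mention and discard at the end. Your Schur complement factorization is cleaner bookkeeping: multiplicativity of the determinant replaces all sign-tracking in the double cofactor expansion, and the positivity hypothesis (via Lemma 1) hands you the invertibility of $A_{n-1}$ that the factorization needs. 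What the paper's cofactor route buys in exchange is that it never inverts anything: with $T$ read as the adjugate, the identity holds for an arbitrary self-adjoint matrix, even when $A_{n-1}$ is singular, whereas your argument genuinely uses $A_{n-1}^{-1}$. In the lemma's positive definite setting the two are equally valid, and your write-up is complete, including the two points that actually need care (invertibility of $A_{n-1}$ and the convention $\left<\alpha,T\alpha\right>=\alpha^{\ast}T\alpha$).
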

\begin{proof}
The statement comes from elementary matrix computation, one should expand 
  $\det(A)$ by minors, with respect to the last row.
\end{proof}

\begin{lemma}\label{lem:jac}
Let $A$ be an $n\times n$ invertible matrix and for $1\le k \le n$
  define the complementary minor to $\left(A^{-1}\right)_k$
  as the $(n-k)$-rowed minor obtained from $A^{-1}$ by deleting
  all the rows and columns associated with $A_k$. 
If $(A^{-1})_{k+1,\ldots,n}$ denotes the complementary minor to 
  $\left(A^{-1}\right)_k$, then it is true that
\begin{equation*}\label{eq:jac}
 \det ((A^{-1})_{k+1,\ldots,n}) = \frac{\det (A_k)}{\det (A)}.
\end{equation*}
\end{lemma}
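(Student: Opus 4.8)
The plan is to exploit the block structure of $A$ relative to the splitting of the index set $\{1,\dots,n\}$ into its first $k$ and last $n-k$ members, and then to invoke the Schur complement. Partition
\begin{equation*}
A = \begin{pmatrix} A_k & B \\ C & D \end{pmatrix},
\end{equation*}
where $A_k$ is the upper-left $k\times k$ block, $D$ is the lower-right $(n-k)\times(n-k)$ block, and $B$, $C$ have the compatible sizes. The key observation is that $(A^{-1})_{k+1,\ldots,n}$, the minor obtained from $A^{-1}$ by deleting exactly the rows and columns indexed by $1,\dots,k$, is precisely the lower-right $(n-k)\times(n-k)$ block of $A^{-1}$. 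So the whole claim reduces to understanding that single block of the inverse.

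First I would treat the generic case $\det(A_k)\neq 0$. Then the Schur complement $S = D - C A_k^{-1} B$ is well defined, and the standard block-inversion formula identifies the lower-right block of $A^{-1}$ with $S^{-1}$, i.e. $(A^{-1})_{k+1,\ldots,n} = S^{-1}$; taking determinants gives $\det\!\big((A^{-1})_{k+1,\ldots,n}\big) = 1/\det(S)$. On the other hand, factoring $A$ into a product of block lower-triangular, block-diagonal, and block upper-triangular matrices yields the multiplicative determinant identity $\det(A) = \det(A_k)\det(S)$, whence $\det(S) = \det(A)/\det(A_k)$. Combining the two displays immediately produces $\det\!\big((A^{-1})_{k+1,\ldots,n}\big) = \det(A_k)/\det(A)$, as claimed. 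The scalar case $k=n-1$ of the factorization $\det(A)=\det(A_k)\det(S)$ is exactly the statement of the preceding lemma, so this argument is a natural blockwise generalization of it.

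The main obstacle is the degenerate case $\det(A_k)=0$, where the Schur complement is undefined and the block-inversion step breaks down. Here I would argue by continuity: both sides of the asserted equality are rational functions of the entries of $A$, whose only poles lie on the locus $\det(A)=0$, and the set of invertible matrices with $\det(A_k)\neq 0$ is dense in the (open) set of all invertible matrices. Since the two rational functions agree on this dense subset, they agree on the entire set of invertible matrices, which settles the case $\det(A_k)=0$ as well — and there the right-hand side correctly collapses to $0$. Thus the identity holds for every invertible $A$, irrespective of whether its leading block is singular.
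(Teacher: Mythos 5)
Your proof is correct, but note that the paper does not actually prove this lemma at all: it merely remarks that the statement is a special case of Jacobi's theorem on minors of the inverse matrix and cites Gradshteyn--Ryzhik for it. Your argument is therefore a self-contained replacement for an external citation rather than a variant of an in-paper proof. The Schur-complement route is sound: when $\det(A_k)\neq 0$, the block LDU factorization of $A$ gives both $\det(A)=\det(A_k)\det(S)$ with $S=D-CA_k^{-1}B$ and, upon inverting the three factors, $(A^{-1})_{k+1,\ldots,n}=S^{-1}$ (note $S$ is automatically invertible here, since $\det(A)=\det(A_k)\det(S)\neq 0$), whence $\det\bigl((A^{-1})_{k+1,\ldots,n}\bigr)=1/\det(S)=\det(A_k)/\det(A)$. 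Your density-and-continuity treatment of the degenerate case $\det(A_k)=0$ is also valid: the locus $\det(A_k)\neq 0$ is the complement of the zero set of a nonzero polynomial, hence open and dense, also inside the open set of invertible matrices, and both sides of the identity are continuous wherever $\det(A)\neq 0$, so agreement on a dense subset extends to all invertible $A$. Your observation that the case $k=n-1$ of $\det(A)=\det(A_k)\det(S)$ recovers the paper's Lemma 3 is accurate and ties your argument neatly to the paper's toolkit. Comparing the two approaches: the paper's citation is shorter and implicitly grants the full strength of Jacobi's theorem (arbitrary complementary minors, with the appropriate sign factors), whereas your proof is elementary, self-contained, restricted to the leading-principal case that the paper actually uses, and makes explicit that nothing beyond block linear algebra and a routine limiting argument is needed.
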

Note that the previous lemma is the special case of 
  Jacobi's theorem \cite{Grandshteyn}. 
We will apply it in the following form.
\begin{corollary}
If $A$ is an $n\times n$ invertible matrix, then
  for the matrix $T=\det (A) (A^{-1})$ we have
\begin{equation*}\label{eq:det}
 \det ((T)_{k+1,\ldots,n}) = \det (A_k) \det (A)^{n-1-k} 
\end{equation*}
  for every $1\le k \le n$.
\end{corollary}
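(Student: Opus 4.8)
The plan is to reduce this corollary directly to Lemma~\ref{lem:jac} by exploiting the fact that $T$ is merely a scalar multiple of $A^{-1}$. First I would recall the definition $T = \det(A)\,A^{-1}$, so that entrywise each coefficient of $T$ equals $\det(A)$ times the corresponding coefficient of $A^{-1}$. The complementary minor $(T)_{k+1,\ldots,n}$ is obtained by deleting the same rows and columns (those associated with $A_k$) from $T$ as one would delete from $A^{-1}$ to form $(A^{-1})_{k+1,\ldots,n}$. Hence the two submatrices satisfy the entrywise relation
\begin{equation*}
 (T)_{k+1,\ldots,n} = \det(A)\,(A^{-1})_{k+1,\ldots,n},
\end{equation*}
which is the only structural observation the argument requires.

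Next I would take determinants on both sides. Since $(T)_{k+1,\ldots,n}$ is a square matrix of order $n-k$, scaling it by the scalar $\det(A)$ multiplies its determinant by $\det(A)^{n-k}$:
\begin{equation*}
 \det((T)_{k+1,\ldots,n}) = \det(A)^{n-k}\,\det((A^{-1})_{k+1,\ldots,n}).
\end{equation*}
Applying Lemma~\ref{lem:jac} to rewrite $\det((A^{-1})_{k+1,\ldots,n}) = \det(A_k)/\det(A)$ and simplifying the powers of $\det(A)$ then yields $\det((T)_{k+1,\ldots,n}) = \det(A_k)\,\det(A)^{n-1-k}$, as claimed.

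There is no genuine obstacle here; the content is entirely carried by the already-established Jacobi identity, and the corollary is just its restatement in a normalisation ($T = \det(A)\,A^{-1}$) that avoids inverses and is better suited to the volume computations to follow. The only points deserving a moment of care are bookkeeping ones: confirming that deleting the rows and columns indexed by $A_k$ produces genuinely the same index set for $T$ as for $A^{-1}$ (immediate, since $T$ and $A^{-1}$ have the same shape), and checking the boundary case $k=n$, where the complementary minor is empty, its determinant is taken to be $1$ by convention, and the right-hand side reduces to $\det(A)\det(A)^{-1}=1$, so the formula remains consistent.
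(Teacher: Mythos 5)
Your proof is correct and is exactly the argument the paper intends: the paper states this corollary without proof as an immediate consequence of the preceding Jacobi lemma, and your derivation (the complementary minor of $T=\det(A)\,A^{-1}$ equals $\det(A)$ times that of $A^{-1}$, is of order $n-k$, so taking determinants and applying the lemma yields $\det(A_k)\det(A)^{n-1-k}$) is precisely that intended reduction. Your remark on the boundary case $k=n$, where the empty minor has determinant $1$ and the right-hand side collapses to $\det(A)\det(A)^{-1}=1$, is a sensible bookkeeping check that the paper leaves implicit.
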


The next two lemmas are about some elementary properties of the gamma function
  $\Gamma$ and the beta integral.
\begin{lemma}
Consider the function $\Ga$, which can be defined for $x\in\mathbb{R}^{+}$ as
\begin{equation*}
\Ga(x)=\int_{0}^{\infty}t^{x-1}\e^{-t}\dint t.
\end{equation*}
This function has the following properties for every natural number $n\neq 0$ 
  and real argument $x\in\mathbb{R}^{+}$.
\begin{align*}
&\Ga(n)=(n-1)!\quad \Ga(1+x)=x\Ga(x)\quad \Ga(1/2)=\sqrt{\pi}\\
&\Ga(n+1/2)=\frac{(2n-1)!!}{2^{n}}\sqrt{\pi}\quad \Ga(n/2)
  =\frac{(n-2)!!}{2^{\frac{n-1}{2}}}\sqrt{\pi}
\end{align*}
\end{lemma}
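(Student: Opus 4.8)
The plan is to derive all five identities from a single functional equation together with two explicit anchor values. First I would establish the recurrence $\Ga(1+x)=x\Ga(x)$ by integrating $\Ga(1+x)=\int_{0}^{\infty}t^{x}\e^{-t}\dint t$ by parts, taking $u=t^{x}$ and $\dint v=\e^{-t}\dint t$. For $x\in\R^{+}$ the boundary contribution $[-t^{x}\e^{-t}]_{0}^{\infty}$ vanishes at both endpoints, and what survives is precisely $x\int_{0}^{\infty}t^{x-1}\e^{-t}\dint t=x\Ga(x)$. This single identity is the engine that produces everything else.

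Next I would pin down the two anchor values. The integer anchor is immediate, since $\Ga(1)=\int_{0}^{\infty}\e^{-t}\dint t=1$. The half-integer anchor $\Ga(1/2)=\sqrt{\pi}$ requires the substitution $t=s^{2}$, which converts $\Ga(1/2)=\int_{0}^{\infty}t^{-1/2}\e^{-t}\dint t$ into $2\int_{0}^{\infty}\e^{-s^{2}}\dint s$. Evaluating this Gaussian integral is the one genuinely non-elementary step; I would handle it by the standard device of squaring the integral and passing to polar coordinates, which gives $\int_{0}^{\infty}\e^{-s^{2}}\dint s=\sqrt{\pi}/2$ and hence $\Ga(1/2)=\sqrt{\pi}$.

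With the recurrence and the two anchors in place, the remaining formulas follow by induction. Starting from $\Ga(1)=1=0!$ and applying $\Ga(n)=(n-1)\Ga(n-1)$ yields $\Ga(n)=(n-1)!$. Starting instead from $\Ga(1/2)=\sqrt{\pi}$ and iterating $\Ga(n+1/2)=\frac{2n-1}{2}\,\Ga(n-1/2)$ telescopes into $\frac{(2n-1)(2n-3)\cdots 1}{2^{n}}\sqrt{\pi}=\frac{(2n-1)!!}{2^{n}}\sqrt{\pi}$. The last identity for $\Ga(n/2)$ is then a mere rewriting of the half-integer case: for odd $n=2m+1$ the double-factorial relations $(n-2)!!=(2m-1)!!$ and $2^{(n-1)/2}=2^{m}$ turn the preceding formula into the stated one.

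I expect no serious obstacle, as all five properties are classical; the only step that is not pure bookkeeping is the evaluation of the Gaussian integral, and even that is routine. The main point to watch is keeping the parity conventions in the double factorials consistent, so that the output of the two inductions matches the fourth and fifth displayed formulas exactly.
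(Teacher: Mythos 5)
Your derivation is the standard one and it is sound: integration by parts yields the functional equation, $\Gamma(1)=1$ and the Gaussian integral supply the two anchors, and the two inductions produce the factorial and half-integer formulas. The paper itself gives no proof of this lemma (it is listed among the elementary facts used without justification), so there is no argument in the paper to compare against; yours is precisely the proof the authors omit.

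One point you should make explicit rather than leave implicit: the fifth identity, as displayed in the lemma, is asserted for \emph{every} natural $n\neq 0$, but it is true only for odd $n$ (with the convention $(-1)!!=1$ covering $n=1$). For even $n=2m$ the left-hand side is $\Gamma(m)=(m-1)!$, while the right-hand side is $\frac{(2m-2)!!}{2^{m-1/2}}\sqrt{\pi}=(m-1)!\,\sqrt{\pi/2}$, because $(2m-2)!!=2^{m-1}(m-1)!$; the two sides differ by the factor $\sqrt{\pi/2}$. The correct even-$n$ statement is $\Gamma(n/2)=(n-2)!!/2^{(n-2)/2}$ with no $\sqrt{\pi}$, which is just the first identity in disguise. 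Your proof treats only odd $n$, which is in fact all that can be proved --- and all the paper ever needs, since at integer arguments it can fall back on $\Gamma(n)=(n-1)!$ --- but a complete write-up should say so explicitly: either restrict the fifth formula to odd $n$ or record the corrected even-$n$ version, rather than silently passing over the even case of a claim stated for all $n$.
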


\begin{lemma}
For parameters $a,b\in\mathbb{R}^{+}$ and $t\in\mathbb{R}^{+}$ the integral 
  equalities
\begin{align*}
         &\int_{0}^{t} x^{a}(t-x)^{b}\dint x
  =t^{1+a+b}\frac{\Ga(a+1)\Ga(b+1)}{\Ga(a+b+2)}\\
G_{a,b}:=&\int_{0}^{1} x^{a}(1-x^{2})^{b}\dint x
  =\frac{1}{2}\frac{\Ga(b+1)\Ga\left(\frac{a+1}{2}\right)}
  {\Ga\left(\frac{a}{2}+b+\frac{3}{2}\right)}
\end{align*}
  hold.
\end{lemma}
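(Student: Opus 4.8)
The plan is to recognise both identities as avatars of the classical Euler beta integral
\begin{equation*}
 B(p,q)=\int_{0}^{1}u^{p-1}(1-u)^{q-1}\dint u=\frac{\Ga(p)\Ga(q)}{\Ga(p+q)},
 \qquad p,q\in\R^{+},
\end{equation*}
and then to recover each of the two stated formulas from it by a single change of variable. So first I would prove this beta--gamma relation, and afterwards deduce the two displayed equalities by elementary substitutions.

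For the first equality I would substitute $x=tu$ with $\dint x=t\dint u$; as $x$ runs from $0$ to $t$, the variable $u$ runs from $0$ to $1$, and the integrand factors as $x^{a}(t-x)^{b}=t^{a+b}u^{a}(1-u)^{b}$. Pulling out $t^{a+b}\cdot t=t^{1+a+b}$ leaves exactly $B(a+1,b+1)$, which gives the claimed value $t^{1+a+b}\Ga(a+1)\Ga(b+1)/\Ga(a+b+2)$. For the second equality, defining $G_{a,b}$, I would substitute $y=x^{2}$, so that $\dint x=\tfrac12 y^{-1/2}\dint y$ and $x^{a}=y^{a/2}$; the integral becomes $\tfrac12\int_{0}^{1}y^{(a-1)/2}(1-y)^{b}\dint y=\tfrac12\,B\!\left(\tfrac{a+1}{2},b+1\right)$. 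Writing this out with the beta--gamma relation and simplifying the denominator argument via $\tfrac{a+1}{2}+b+1=\tfrac{a}{2}+b+\tfrac32$ yields the stated value of $G_{a,b}$.

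The only substantive step is therefore the beta--gamma relation itself, which I would establish by the standard double-integral argument: writing
\begin{equation*}
 \Ga(p)\Ga(q)=\int_{0}^{\infty}\!\!\int_{0}^{\infty}s^{p-1}t^{q-1}\e^{-(s+t)}\dint s\dint t
\end{equation*}
and passing to the variables $u=s+t$, $v=s/(s+t)$ (equivalently $s=uv$, $t=u(1-v)$), whose Jacobian is $u$. The region $s,t>0$ maps bijectively onto $u>0$, $v\in(0,1)$, and the integral separates as
\begin{equation*}
 \Ga(p)\Ga(q)=\int_{0}^{\infty}u^{p+q-1}\e^{-u}\dint u\int_{0}^{1}v^{p-1}(1-v)^{q-1}\dint v=\Ga(p+q)\,B(p,q),
\end{equation*}
giving the relation. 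The main obstacle, such as it is, is bookkeeping: verifying that the change of variables is a bijection with the claimed Jacobian and that Fubini applies so the double integral splits; convergence is guaranteed by $p,q>0$, exactly as in the definition of $\Ga$ in the preceding lemma.
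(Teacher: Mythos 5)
Your proof is correct and follows essentially the same route as the paper: both identities are reduced to Euler's beta integral by the substitutions $x=tu$ and $y=x^{2}$, which is exactly what the paper's one-line proof asserts. The only difference is that you also prove the beta--gamma relation $B(p,q)=\Ga(p)\Ga(q)/\Ga(p+q)$ itself, which the paper simply quotes; that extra detail is sound and makes the argument self-contained.
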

\begin{proof}
These are consequences of the formula below for the beta integral
\begin{equation*}
\int_{0}^{1}x^{p}(1-x)^{q}\dint x=\frac{\Ga(p+1)\Ga(q+1)}{\Ga(p+q+2)}.
\end{equation*}
\end{proof}

\begin{lemma}
The surface $F_{n-1}$ of a unit sphere in an $n$ dimensional space is
\begin{equation*}
F_{n-1}=\frac{n\pi^{\frac{n}{2}}}{\Ga\left(\frac{n}{2}+1 \right)}.
\end{equation*}
\end{lemma}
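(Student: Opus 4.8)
The plan is to evaluate the Gaussian integral $I_n=\int_{\R^n}\e^{-|x|^2}\dint x$ in two different ways and then compare the results. On the one hand, Fubini's theorem factorises $I_n$ into a product of $n$ identical one-dimensional integrals, each equal to $\sqrt{\pi}$ (this value being $\Ga(1/2)$ from the gamma-function lemma, combined with the substitution $t=x^2$), so that $I_n=\pi^{n/2}$.

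On the other hand, since the integrand depends only on $r=|x|$, I would pass to spherical coordinates. The sphere of radius $r$ in $\R^n$ has surface area $F_{n-1}\,r^{n-1}$, where $F_{n-1}$ is exactly the quantity we wish to compute. Integrating the radially symmetric function over concentric shells then gives
\begin{equation*}
 I_n=\int_0^\infty\e^{-r^2}F_{n-1}\,r^{n-1}\dint r=F_{n-1}\int_0^\infty\e^{-r^2}r^{n-1}\dint r.
\end{equation*}

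The remaining radial integral is handled by the substitution $t=r^2$, which turns it into a gamma integral: one obtains $\int_0^\infty\e^{-r^2}r^{n-1}\dint r=\tfrac12\Ga(n/2)$ directly from the definition of $\Ga$ in the earlier lemma. Equating the two evaluations yields $\pi^{n/2}=\tfrac12 F_{n-1}\Ga(n/2)$, hence $F_{n-1}=2\pi^{n/2}/\Ga(n/2)$. A final application of the identity $\Ga(1+x)=x\Ga(x)$ with $x=n/2$ rewrites $\Ga(n/2)=2\Ga(n/2+1)/n$, giving the claimed formula $F_{n-1}=n\pi^{n/2}/\Ga(n/2+1)$.

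The step I expect to require the most care — rather than mere assertion — is the passage to spherical coordinates, i.e.\ the claim that integrating a radial function reduces to $\int_0^\infty f(r)F_{n-1}r^{n-1}\dint r$ with $F_{n-1}r^{n-1}$ the surface measure of the radius-$r$ sphere. Everything else is a routine substitution and bookkeeping with the gamma function. This geometric fact is essentially the coarea formula; it can either be taken as the defining property of the $(n-1)$-dimensional surface measure, or established by an independent scaling argument, so that the Gaussian-integral computation then pins down its normalising constant $F_{n-1}$.
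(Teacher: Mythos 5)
Your proof is correct, but it follows a genuinely different route from the paper's. The paper's proof is a one-liner: it quotes the well-known volume formula $V_n(r)=r^n\pi^{n/2}/\Ga\left(\frac{n}{2}+1\right)$ for the $n$-dimensional ball and obtains the surface area by differentiation, $F_{n-1}=\left.\frac{\dint V_n(r)}{\dint r}\right\vert_{r=1}$. You instead evaluate the Gaussian integral $\int_{\R^n}\e^{-|x|^2}\dint x$ twice --- once by Fubini, giving $\pi^{n/2}$, and once in spherical coordinates, giving $\tfrac{1}{2}F_{n-1}\Ga(n/2)$ --- and solve for $F_{n-1}$, finishing with the identity $\Ga\left(\frac{n}{2}+1\right)=\frac{n}{2}\Ga\left(\frac{n}{2}\right)$. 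The trade-off is this: the paper outsources the substantive content to the quoted volume formula (together with the fact that surface area is the $r$-derivative of volume, itself a coarea-type statement), whereas your argument takes the spherical-coordinates decomposition of a radial integral as the single geometric input and then determines the normalising constant $F_{n-1}$ internally, using only the gamma-function lemma already stated in the paper. Your version is therefore more self-contained --- it derives the constant rather than citing it --- at the cost of being longer; and your flagging of the passage to spherical coordinates as the one step needing genuine justification is accurate, since that is where all the geometric content sits in either approach.
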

\begin{proof}
It follows from the well-known formula for the volume of the sphere in $n$ 
  dimension with radius $r$
\begin{equation*}
V_{n}(r)=\frac{r^{n}\pi^{\frac{n}{2}}}{\Ga\left(\frac{n}{2}+1 \right)},
\end{equation*}
  since $F_{n-1}=\left.\frac{\dint V_{n}(r)}{\dint r}\right\vert_{r=1}$.
\end{proof}

When we integrate on a subset of the Euclidean space we always integrate with 
  respect to the usual Lebesgue measure.
The Lebesgue measure on $\mathbb{R}^{n}$ will be denoted by $\la_{n}$.
The following lemma is the backbone of our investigations.

\begin{lemma}\label{lem:integral}
Assume that $T$ is an $n\times n$ self-adjoint, positive definite matrix,
  $l\in\R$ and $\mu>0$. 
Let $L$ be an $m$-dimensional subspace of the vector space $\mathbb{K}^{n}$
  and $x$ is a fixed vector. 
Let us denote the orthogonal projection onto the orthogonal complement 
  of the subspace $T(L)$ by $P_{M^\perp}$.
Set
\begin{align*}
&\E^\R (T,\mu,L,x):=\left\{y\in L \vert\ \left<x+y,T(x+y)\right><\mu \right\},
  \quad T_{ij}\in \mathbb{R};\\
&\E^\C (T,\mu,L,x):=\left\{y\in L \vert\ \left<x+y,T(x+y)\right><\mu \right\},
  \quad T_{ij}\in \mathbb{C};
\end{align*} 
  then
\begin{equation*}
 \int\limits_{\E^\R (T,\mu,L,x)}
 (\mu-\left< x+y,T(x+y)\right>)^{l}\dint\la_{m}(y)=
 \frac{F_{m-1}G_{m-1,l}}{\sqrt{\det (T|_L)}}(\mu-||z_0||^2)_+^{\frac{m}{2}+l}
\end{equation*}
  and
\begin{equation*}
 \int\limits_{\E^\C (T,\mu,L,x)}
 (\mu-\left< x+y,T(x+y)\right>)^{l}\dint\la_{2m}(y)=
 \frac{F_{2m-1}G_{2m-1,l}}{\det (T|_L)}(\mu-||z_0||^2)_+^{m+l},
\end{equation*}
  where $T|L$ is the restriction of $T$ to the subspace $L$ and 
  $z_0 := P_{M^\perp}\sqrt{T}x$.
\end{lemma}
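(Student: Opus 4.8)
The plan is to reduce both integrals, by a linear change of variables that \emph{whitens} the quadratic form, to a one–dimensional radial integral of exactly the type appearing in the lemma that defines $G_{a,b}$. Since $T$ is positive definite it has a positive definite square root $\sqrt{T}$, and $\langle x+y,T(x+y)\rangle=\|\sqrt{T}(x+y)\|^2$. First I would fix an orthonormal basis $e_1,\dots,e_m$ of $L$ over $\K$ and write $y=\sum_j t_j e_j$ with $t\in\K^m$, so that $\dint\la_m(y)=\dint\la_m(t)$ in the real case and $\dint\la_{2m}(y)=\dint\la_{2m}(t)$ in the complex case. In this basis the restricted form is $\langle y,Ty\rangle=t^\ast (T|_L)\,t$, where $(T|_L)_{jk}=\langle e_j,Te_k\rangle$ is Hermitian and positive definite, and expanding the full quadratic gives
\begin{equation*}
\langle x+y,T(x+y)\rangle=t^\ast (T|_L)\,t+b^\ast t+t^\ast b+\langle x,Tx\rangle,\qquad b_j:=\langle e_j,Tx\rangle .
\end{equation*}
Completing the square, this equals $(t-t_\ast)^\ast (T|_L)(t-t_\ast)+\langle x,Tx\rangle-b^\ast (T|_L)^{-1}b$ with $t_\ast=-(T|_L)^{-1}b$, so the minimum of the form over $y\in L$ is $\langle x,Tx\rangle-b^\ast (T|_L)^{-1}b$.

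The key step, and the one I expect to be the main obstacle, is to identify this minimum with $\|z_0\|^2=\|P_{M^\perp}\sqrt{T}x\|^2$, where $M:=\sqrt{T}(L)$ is the subspace whose orthogonal complement defines $z_0$. Writing $u=\sqrt{T}x$ and $g_j=\sqrt{T}e_j$, self-adjointness of $\sqrt{T}$ gives $b_j=\langle g_j,u\rangle$ and $(T|_L)_{jk}=\langle g_j,g_k\rangle$, so $T|_L$ is precisely the Gram matrix of the (generally non-orthonormal) spanning set $\{g_j\}$ of $M$. The standard Gram-matrix formula for the orthogonal projection then yields $\|P_M u\|^2=b^\ast (T|_L)^{-1}b$, whence
\begin{equation*}
\langle x,Tx\rangle-b^\ast (T|_L)^{-1}b=\|u\|^2-\|P_M u\|^2=\|P_{M^\perp}u\|^2=\|z_0\|^2 .
\end{equation*}
Thus after recentring, the domain becomes $\{t:(t-t_\ast)^\ast (T|_L)(t-t_\ast)<\mu-\|z_0\|^2\}$ and the integrand is $(\mu-\|z_0\|^2-(t-t_\ast)^\ast (T|_L)(t-t_\ast))^l$; in particular both are empty/zero when $\mu\le\|z_0\|^2$, which accounts for the $(\cdot)_+$ in the claimed formulas.

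Finally I would translate by $t_\ast$ (measure preserving) and substitute $s=\sqrt{T|_L}\,t$. Over $\R$ this linear map has Jacobian $\det(T|_L)^{-1/2}$, while over $\C$, viewed as a real map on $\R^{2m}$, its Jacobian is $|\det_{\C}\sqrt{T|_L}|^{-2}=\det(T|_L)^{-1}$; this is exactly the source of the $\sqrt{\det(T|_L)}$ versus $\det(T|_L)$ distinction between the two formulas. Setting $\rho=\mu-\|z_0\|^2$, both integrals reduce to the rotationally invariant
\begin{equation*}
\int_{\|s\|^2<\rho}(\rho-\|s\|^2)^l\dint\la_d(s)=F_{d-1}\int_0^{\sqrt{\rho}}(\rho-r^2)^l r^{d-1}\dint r=F_{d-1}\,\rho^{\frac{d}{2}+l}G_{d-1,l},
\end{equation*}
with $d=m$ in the real case and $d=2m$ in the complex case, where the last equality uses the substitution $r=\sqrt{\rho}\,u$ together with the definition of $G_{d-1,l}$ and the surface-area formula for $F_{d-1}$. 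Multiplying by the Jacobian factor then gives the two stated expressions, with exponents $\tfrac{m}{2}+l$ and $m+l$ respectively.
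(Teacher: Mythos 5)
Your proof is correct, but it takes a genuinely different route from the paper's. The paper pushes the domain forward into the ambient space: it considers $\Phi(y)=\frac{1}{\sqrt{\mu}}\sqrt{T}(x+y)$, recognizes $\Phi(\E^\R(T,\mu,L,x))$ as the intersection of the affine subspace $\Ran(\Phi)$ with the unit ball of $\K^n$ (Figure \ref{fig:1}), reads off the Jacobian factor $\frac{\mu^{m/2}}{\sqrt{\det(T|_L)}}$ from the metric induced on this affine slice by the parametrization, and then integrates in spherical coordinates centred at the point of the slice nearest the origin; the quantity $\frac{1}{\mu}\|z_0\|^2$ enters as the squared distance of $\Ran(\Phi)$ from the origin, a fact the paper asserts from the geometric picture without computation. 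You instead remain in coordinates on $L$: you complete the square, prove the identity $\left<x,Tx\right>-b^\ast(T|_L)^{-1}b=\|P_{M^\perp}\sqrt{T}x\|^2$ by observing that $T|_L$ is exactly the Gram matrix of $\sqrt{T}e_1,\dots,\sqrt{T}e_m$, and then whiten via $s=\sqrt{T|_L}\,t$, the real and complex Jacobians $\det(T|_L)^{-1/2}$ and $\det(T|_L)^{-1}$ producing the two different denominators. Your argument is more elementary and self-contained (no parametrized-surface or induced-metric machinery, and the geometric distance claim is replaced by an explicit algebraic identity, which is indeed the crux of the lemma), while the paper's is shorter and makes the ball-slicing geometry visible; both reduce in the end to the same rotationally invariant integral $F_{d-1}G_{d-1,l}\,\rho^{\frac{d}{2}+l}$ with $d=m$ or $d=2m$.

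One remark on the statement itself: the lemma defines $P_{M^\perp}$ as the projection onto the orthogonal complement of $T(L)$, whereas both the paper's proof and yours require $M=\sqrt{T}(L)$, and these subspaces differ in general. The reading $M=\sqrt{T}(L)$, which you adopt explicitly, is the one under which the formula is true, and it is the one used downstream in Lemma \ref{lem:proj} and Lemma \ref{lem:complex}, where $M=\sqrt{T}L$ and $M=\sqrt{A_3^{-1}}(L_3)$ appear; so you have in effect proved the intended (corrected) statement.
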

\begin{proof}
We prove the statement for the real case only, the other cases can be proved 
  in the same way.
The matrix $T$ is supposed to be positive definite thus there exists
  a unique self-adjoint positive definite matrix $\sqrt{T}$ for which 
  $T = (\sqrt{T})^2$ holds.

Consider the map $\Phi :L\to\R^n$, 
$\Phi (y):=\frac{1}{\sqrt{\mu}}\sqrt{T} (x+y)$ and choose
  an orthonormal basis of the subspace $L$: $e_1,\ldots,e_m$. 
The corresponding parametrization of $\Ran(\Phi)$ is 
\begin{equation*}
z(y_1,\ldots,y_m)
  =\frac{1}{\sqrt{\mu}}\sqrt{T} \left(x+\sum\limits_{i=1}^m y_i e_i \right)
\end{equation*}
  and the induced metric on $\Ran(\Phi)$ can be written as 
\begin{equation*}
g_{ij}
=\left<\frac{\partial z}{\partial y_i},\frac{\partial z}{\partial y_j}\right>
=\left<\frac{1}{\sqrt{\mu}}\sqrt{T} e_i,\frac{1}{\sqrt{\mu}}\sqrt{T} e_j\right> 
=\frac{1}{\mu}\left<e_i,T e_j\right>
\end{equation*}
  hence the inverse Jacobian of this transformation is
  $\frac{\mu^\frac{m}{2}}{\sqrt{\det (T|_L)}}$.
We can write 
\begin{align*}
\int\limits_{\E^\R (T,\mu,L,x)}&
 (\mu-\left< x+y,T(x+y)\right>)^{l}\dint\la_{m}(y)=\\
&=\frac{\mu^{\frac{m}{2}+l}}{\sqrt{\det (T|_L)}}
\int\limits_{\Phi (\E^\R (T,\mu,L,x))}
(1-||z||^2)^l\dint\la_{m}(z).
\end{align*}
The set $\Phi (\E^\R (T,\mu,L,x))$ is the intersection of the affine
  subspace $\Ran (\Phi)$ and the unit ball of $\R^n$ centered at the origin   
  (Figure \ref{fig:1}).
Note that, $\Phi (\E^\R (T,\mu,L,x))$ is non-empty if and only if
  the distance of $\Ran (\Phi)$ from the origin is less that one:
  $d^2:=\frac{1}{\mu}||z_0||^2 = \frac{1}{\mu}||P_{M^\perp}\sqrt{T}x||^2<1$.
\begin{figure}[!ht]
\centering
\begin{tikzpicture}
 \draw[->] ({-(1+sqrt(5))},0) -- ({1+sqrt(5)},0);
 \draw[->] (0,{-(1+sqrt(5))}) -- (0,{1+sqrt(5)});
 \draw     (0,0) circle (2);
 \draw[very thick,->] (0,0)  -- ({sqrt(2)},0) node[pos=0.5,below] {$\frac{1}{\sqrt{\mu}}z_0$};
 \draw[very thick]     ({sqrt(2)},{-sqrt(2)}) -- ({sqrt(2)},{sqrt(2)}) node[pos=0,right] {$\Phi (\E^\R (T,\mu,L,x))$};
 \draw[very thick,->] (0,0) -- ({sqrt(2)},1) node[pos=0.5,sloped,above] {$z$}; 
 \draw (2.2,0) node[below] {$1$};
 \draw ({sqrt(2)},0) -- ({sqrt(2)},1) node[pos=0.5,right] {$r$};
 \draw     ({sqrt(2)},{-(1+sqrt(5))}) -- ({sqrt(2)},{1+sqrt(5)}) node[pos=0,right] {$\Ran (\Phi)$};
\end{tikzpicture}
\caption{The sketch of the region of integration.}
\label{fig:1}
\end{figure}
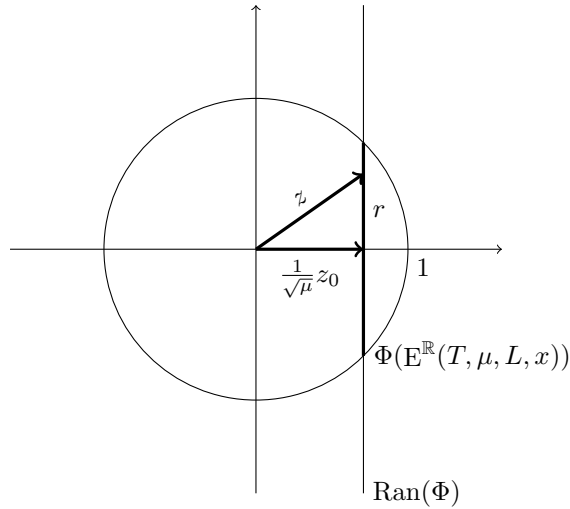
Then we compute the integral in spherical coordinates. 
The integral with respect to the angles gives the surface of the sphere
  $F_{m-1}(1-d^2)_{+}^{(m-1)/2}$ and the radial part is
\begin{equation*}
 \frac{\mu^{\frac{m}{2}+l}}{\sqrt{\det (T|_L)}}
 F_{m-1}(1-d^2)_+^{\frac{m-1}{2}+l}
\int\limits_0^{\sqrt{1-d^2}}
\left(1-\frac{r^2}{1-d^2}\right)^l
\left(\frac{r}{\sqrt{1-d^2}}\right)^{m-1}
\dint r
\end{equation*}
We substitute $u = \frac{r}{\sqrt{1-d^2}}$ and obtain the desired formula.
\end{proof}

\begin{remark}\label{rem:1}
If $A$ is an $n\times n$ positive definite matrix, $L\subseteq\K^n$ is a 
 subspace and $M=\sqrt{A^{-1}}(L)$, then $M^\perp = \sqrt{A}(L^\perp)$
 because
\begin{equation*}
M^\perp = (\sqrt{A^{-1}}(L))^\perp 
  = \ker (P_L \sqrt{A^{-1}})=\sqrt{A}(L^\perp).
\end{equation*}
\end{remark}

Recall that the Pauli matrices 
$\sigma_1=\left(\begin{array}{cc} 0 & 1  \\ 1 & 0  \end{array} \right)$, 
$\sigma_2=\left(\begin{array}{cc} 0 & -i \\ i & 0  \end{array} \right)$ and
$\sigma_3=\left(\begin{array}{cc} 1 & 0  \\ 0 & -1 \end{array} \right)$ 
  together with 
$I=\left(\begin{array}{cc} 1 & 0 \\ 0 & 1 \end{array} \right)$
  form an orthogonal basis of the space of $2\times 2$ self-adjoint matrices.

\section{The volume of qubit channels}

To determine the volumes of different qubit quantum channels we use the same
  method which consist of three parts.
First we use an unitary transformation to represent channels in a suitable 
  form for further computations.
Then we split the parameter space into lower dimensional parts such that
  the adequate application of the previously mentioned lemmas leads us to
  the result.

\subsection{General qubit channels}

A block matrix $Q$ of the form \eqref{eq:blk} corresponds to
  a qubit channel if and only if $Q_{11},Q_{22}\in\M{2}$, 
  $Q_{21} = Q_{12}^\ast$, $\Tr Q_{12}=0$ and $Q\ge 0$ which means
  that the space of qubit channels with real and complex entries can be  
  identified with convex subsets of $\R^7$ and $\R^{12}$, respectively.
We introduce the following notations for these sets.
\begin{align*}
 & \Q_\R = \{Q\in \R^{4\times 4}|Q:\MR{2}\to\MR{2},Q> 0\}\\
 & \Q_\C = \{Q\in \C^{4\times 4}|Q:\MC{2}\to\MC{2},Q> 0\}
\end{align*}
A general element can be parametrized as
\begin{equation}\label{eq:matQ}
 Q = \left(
 \begin{array}{cccc}
  a &   b & c &  d \\
  \bar{b} & 1-a & e & -c \\
  \bar{c} &   \bar{e} & f &  g \\
  \bar{d} &  -\bar{c} & \bar{g} & 1-f
 \end{array}
 \right),
\end{equation}
  where $a,f\in [0,1]$ and $Q>0$.
Let us choose the unitary matrix
\begin{equation}\label{eq:unit}
  U = \left(
 \begin{array}{cccc}
  1 & 0 & 0 & 0 \\
  0 & 0 & 1 & 0 \\
  0 & 1 & 0 & 0 \\
  0 & 0 & 0 & 1
 \end{array}
 \right)
\end{equation}
  and define the matrix $A$ as
\begin{equation}
 A = U^\ast Q U = \left(
 \begin{array}{cccc}
  a &   c &   b &  d \\
  \bar{c} &   f &   e &  g \\
  \bar{b} &   \bar{e} & 1-a & -c \\
  \bar{d} &   \bar{g} &  -\bar{c} & 1-f
 \end{array}\label{eq:matA}
 \right)
\end{equation}
  which is positive definite if and only if $Q$ is positive definite
  hence $A$ gives an equivalent parametrization of $\Q_\R$ and $\Q_\C$.

\begin{lemma}\label{lem:proj}
Let $A$ be an $n\times n$ positive definite matrix, $T = \det (A) A^{-1}$,
$L\subseteq\K^n$ a subspace, $x\in L^\perp$ and $M = \sqrt{T}L$. 
If $\dim (L^\perp) = 1$, then
\begin{equation*}
||P_{M^\perp}\sqrt{T}x||^2=\frac{\det (A)}{\left<x,Ax\right>}||x||^4.
\end{equation*}
\end{lemma}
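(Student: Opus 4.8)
The plan is to reduce the whole computation to a projection onto a one-dimensional subspace, which can then be carried out with the elementary formula for the norm of an orthogonal projection. The starting observation is that $T$ is not literally $A^{-1}$ but a positive scalar multiple of it: since $A$ is positive definite we have $\det(A)>0$, and therefore $\sqrt{T}=\sqrt{\det(A)\,A^{-1}}=\sqrt{\det(A)}\,\sqrt{A^{-1}}$. Multiplying a subspace by the positive scalar $\sqrt{\det(A)}$ does not change it, so $M=\sqrt{T}L=\sqrt{A^{-1}}(L)$. This is exactly the hypothesis of Remark \ref{rem:1}, which then yields $M^\perp=\sqrt{A}(L^\perp)$.

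Next I would use the assumption $\dim(L^\perp)=1$. If $x=0$ both sides of the claimed identity vanish, so I may assume $x\neq 0$; then $x\in L^\perp$ together with $\dim(L^\perp)=1$ forces $L^\perp=\Span(x)$, and consequently $M^\perp=\sqrt{A}(L^\perp)=\Span(\sqrt{A}x)$. Thus $P_{M^\perp}$ is the orthogonal projection onto the single line spanned by the vector $w:=\sqrt{A}x$, and for the vector $v:=\sqrt{T}x$ the standard one-dimensional projection formula gives
\begin{equation*}
||P_{M^\perp}\sqrt{T}x||^2=\frac{|\left<w,v\right>|^2}{||w||^2}=\frac{|\left<\sqrt{A}x,\sqrt{T}x\right>|^2}{\left<\sqrt{A}x,\sqrt{A}x\right>}.
\end{equation*}

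It then remains to evaluate the two inner products, using that $\sqrt{A}$ and $\sqrt{A^{-1}}$ are self-adjoint and mutually inverse. The denominator is $\left<\sqrt{A}x,\sqrt{A}x\right>=\left<x,Ax\right>$. For the numerator, pulling out the scalar $\sqrt{\det(A)}$ from $\sqrt{T}=\sqrt{\det(A)}\sqrt{A^{-1}}$ and using $\sqrt{A}\sqrt{A^{-1}}=I$ gives $\left<\sqrt{A}x,\sqrt{T}x\right>=\sqrt{\det(A)}\left<\sqrt{A}x,\sqrt{A^{-1}}x\right>=\sqrt{\det(A)}\left<x,x\right>=\sqrt{\det(A)}\,||x||^2$, a nonnegative real number. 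Substituting these into the displayed formula produces
\begin{equation*}
||P_{M^\perp}\sqrt{T}x||^2=\frac{\det(A)\,||x||^4}{\left<x,Ax\right>},
\end{equation*}
which is the assertion. The same computation works verbatim in the complex case because $\det(A)$ is still a positive real and the cross term $\left<x,x\right>$ remains real.

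The one genuinely load-bearing step is recognizing that Remark \ref{rem:1} applies despite $T\neq A^{-1}$; everything after that is a routine one-dimensional projection. A minor point to keep straight is the separate treatment of $x=0$, and the fact that in the complex inner product the relevant quantities ($\det(A)$ and $\left<x,x\right>$) are real, so no conjugation subtleties arise.
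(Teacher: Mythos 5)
Your proof is correct and follows essentially the same route as the paper's: both invoke Remark \ref{rem:1} (after absorbing the scalar $\sqrt{\det(A)}$ into the subspace $M$) to identify $M^\perp=\sqrt{A}(L^\perp)=\Span(\sqrt{A}x)$, and then evaluate the rank-one projection of $\sqrt{T}x$ onto that line, the paper writing it as $P_{M^\perp}=b_1\otimes b_1$ with $b_1=\|\sqrt{A}x\|^{-1}\sqrt{A}x$ while you use the equivalent formula $|\langle w,v\rangle|^2/\|w\|^2$. Your explicit handling of $x=0$ and of the scalar-multiple issue $M=\sqrt{T}L=\sqrt{A^{-1}}(L)$ are minor clarifications of steps the paper leaves implicit.
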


\begin{proof}
According to Remark \ref{rem:1} $M^\perp = \sqrt{A}(L^\perp)$.
 If $\dim (L^\perp) = 1$, then $\{b_1=||\sqrt{A}x||^{-1}\sqrt{A}x\}$
 is an orthonormal basis of $M^\perp$ hence $P_{M^\perp}=b_1\otimes b_1$.
We can write 
 \begin{equation*}
  ||P_{M^\perp}\sqrt{T}x||^2 = \det (A) \left|\left<b_1,\sqrt{A^{-1}}x\right>\right|^2
  = \frac{\det (A)}{\left<x,Ax\right>}||x||^4
 \end{equation*}
  which completes the proof.
\end{proof}

\begin{theorem}
The volume of the space $\Q_\R$ with respect to the Lebesgue measure is
 \begin{equation*}
  V(\Q_\R) = \frac{4\pi^3}{105},
 \end{equation*}
  and the distribution of volume over classical channels can be written as
 \begin{equation*}
 V(a,f) = \frac{128}{45}\pi^{2}
 \times
 \begin{cases}
  (a f)^{3/2} \left(5(1-a)(1-f)-af\right)
  & \text{if } a+f<1\\
  ((1 - a) (1 - f))^{3/2} (5 af-(1-a)(1-f))
  & \text{if } a+f\ge 1.
 \end{cases}
 \end{equation*}
\end{theorem}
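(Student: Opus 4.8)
The plan is to work throughout with the conjugated matrix $A=U^\ast QU$ of \eqref{eq:matA}, which is positive definite exactly when $Q$ is, and to read off positivity from the nested leading minors: by the minor criterion (Lemma 1), $A>0$ iff $\det(A_1),\det(A_2),\det(A_3),\det(A_4)>0$. The variables $a,f$ are precisely the classical-channel parameters, so for fixed $(a,f)$ I would integrate the remaining five variables $b,c,d,e,g$ in the order $(d,g)$, then $(b,e)$, then $c$, peeling off one minor at a time. The point is that, by the determinant-expansion lemma (Lemma 3), each condition $\det(A_k)>0$ is an ellipsoidal constraint in the last-column entries newly introduced at level $k$, which is exactly the hypothesis of Lemma \ref{lem:integral}.

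For the outermost pair I apply Lemma 3 with $n=4$: $\det(A)=(1-f)\det(A_3)-\langle\alpha,\det(A_3)A_3^{-1}\alpha\rangle$ with $\alpha=(d,g,-c)$. Splitting $\alpha=x+y$, where $y=(d,g,0)\in L=\Span(e_1,e_2)$ and $x=(0,0,-c)\in L^\perp$, Lemma \ref{lem:integral} (real case, $m=2$, $l=0$) applies with $\mu=(1-f)\det(A_3)$. Lemma \ref{lem:proj} (here $\dim L^\perp=1$, with $A_3$ in the role of $A$) gives $\|z_0\|^2=\det(A_3)c^2/(1-a)$, while Jacobi's theorem (Lemma \ref{lem:jac}) gives $\det(T|_L)=(1-a)\det(A_3)$; hence the $(d,g)$-integral equals $\pi(1-a)^{-3/2}\sqrt{\det(A_3)}\,[(1-a)(1-f)-c^2]_+$. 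Next I integrate $(b,e)$: now Lemma 3 with $n=3$ gives $\det(A_3)=(1-a)(af-c^2)-\langle(b,e),\det(A_2)A_2^{-1}(b,e)\rangle$ with no shift ($x=0$), so Lemma \ref{lem:integral} with $l=\tfrac12$ and $\det(T|_L)=\det(A_2)=af-c^2$ yields $\int\sqrt{\det(A_3)}\,\dint b\,\dint e=\tfrac{2\pi}{3}(1-a)^{3/2}(af-c^2)$. The powers of $(1-a)$ cancel and the integral over $(b,e,d,g)$ collapses to $\tfrac{2\pi^2}{3}\,[(1-a)(1-f)-c^2]_+\,(af-c^2)$.

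It remains to integrate over $c$. The constraint $\det(A_2)>0$ forces $c^2<af$, while the $_+$ truncation forces $c^2<(1-a)(1-f)$, so $c$ ranges over $c^2<\min\{af,(1-a)(1-f)\}$. Since this minimum equals $af$ when $a+f<1$ and $(1-a)(1-f)$ when $a+f\ge 1$, the case split of the statement arises here naturally. Writing $p=af$, $q=(1-a)(1-f)$, $m=\min\{p,q\}$ and $M=\max\{p,q\}$, the elementary even integral $\int_{-\sqrt m}^{\sqrt m}(q-c^2)(p-c^2)\,\dint c=\tfrac{4}{15}m^{3/2}(5M-m)$ produces the two displayed branches, multiplied by $\tfrac{2\pi^2}{3}$, i.e.\ by $\tfrac{8}{45}\pi^2$ times the bracketed polynomials.

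The one genuinely delicate point is the overall constant. The canonical Euclidean measure is the Hilbert--Schmidt measure on Choi matrices, in which every strictly upper-triangular entry carries a factor $\sqrt2$; here $a$ and $f$ each drive two diagonal entries and, crucially, the parameter $c$ occupies both positions $(1,3)$ and $(2,4)=-c$, so the Jacobian of the parametrization $(a,b,c,d,e,f,g)$ relative to the naive product measure is $\sqrt{2^8}=16$. Multiplying the $\tfrac{8}{45}\pi^2$ above by $16$ gives the stated $V(a,f)=\tfrac{128}{45}\pi^2\{\cdots\}$. Finally $V(\Q_\R)=\int_0^1\!\int_0^1 V(a,f)\,\dint a\,\dint f$; the map $(a,f)\mapsto(1-a,1-f)$ exchanges the two branches, so the total is twice the $a+f<1$ contribution, and each monomial reduces to the Dirichlet integral $\int\!\int_{a+f<1}a^pf^q\,\dint a\,\dint f=\Gamma(p+1)\Gamma(q+1)/\Gamma(p+q+3)$ with half-integer $p,q$; the $\Gamma(5/2),\Gamma(7/2)$ values supply the last factor of $\pi$ and collapse to $\tfrac{4\pi^3}{105}$. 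I expect the main obstacles to be bookkeeping ones: tracking the measure-normalization factor correctly, and keeping the shifted-ellipsoid reduction (the fixed $-c$ entry, handled through Lemma \ref{lem:proj}) straight across the two nested applications of Lemma \ref{lem:integral}.
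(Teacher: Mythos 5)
Your proposal is correct and follows essentially the same route as the paper's proof: the same conjugation $A=U^\ast QU$, the same peeling of the leading minors by integrating $(d,g)$, then $(b,e)$, then $c$ via Lemma \ref{lem:integral}, Lemma \ref{lem:proj} and Jacobi's theorem, with matching constants (your $\tfrac{8}{45}\pi^2\cdot 2^4=\tfrac{128}{45}\pi^2$ agrees with the paper's $\tfrac{2^6}{15}F_1^2G_{1,0}G_{1,\frac12}$, and your volume element $\sqrt{2^8}\,\dint\la_7$ is exactly the paper's $2^4\dint\la_7$). The only places you are more explicit than the paper are the final Dirichlet integral over $(a,f)$ and the Gram-determinant computation of the measure normalization, both of which the paper merely asserts.
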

\begin{proof}
The volume element corresponding to the parametrization \eqref{eq:matQ}
  in the real case is $2^4\dint \la_7$. A matrix of the form \eqref{eq:matA} 
  with real entries represents a point of $\Q_\R$ if and only if
  $a,f\in [0,1]$ and $\det (A_i) > 0$ for $i = 1,2,3,4$. 
First we assume that $a$ and $f$ are given. 

If $A_3$ is fixed, then by Lemma \ref{lem:integral} and Lemma \ref{lem:proj}
  we have
\begin{align*}
 V(A_3) &= \int\limits_{E^\R (T_3,(1-f)\det (A_3),L_3,x_3)} 2^4 \dint \la_2
 \\
 &=\frac{2^4 F_{1}G_{1,0}\left((1-f)-\frac{c^2}{1-a}\right)_{+}\det (A_3)}{\sqrt{\det (T_3|_{L_3})}}
 \\
 &=
 2^4 F_1 G_{1,0}
 \frac{\left(
 (1-a)(1-f) - c^2 
 \right)_{+}}{(1-a)^{3/2}}
 \det (A_3)^{1/2},
\end{align*}
  where $L_3 = \Span \{(1,0,0)^T,(0,1,0)^T\}$ and $x_3 = (0,0,-c)^T$.

If $A_2$ is fixed, then
\begin{align*}
 V(A_2) &= \int\limits_{\E^\R (T_2,(1-a)\det (A_2),\R^2,0)}
 V(A_3)
 \dint \la_2 \\
 &=
 2^4 F_1 G_{1,0}
 \frac{\left(
 (1-a)(1-f) - c^2 
 \right)_{+}}{(1-a)^{3/2}}
 \\
 &\times
 \int\limits_{\E^\R (T_2,(1-a)\det (A_2),\R^2,0)}
 \left((1-a)\det (A_2)-\langle y,T_2 y\rangle\right)^{1/2}
 \dint \la_2 (y)\\
 &=
 2^4 F_1^2 G_{1,0} G_{1,\frac{1}{2}}
 \left(
 (1-a)(1-f) - c^2 
 \right)_{+} \det (A_2).
\end{align*}
Observe that $af-c^2>0$ implies $(1-a)(1-f)-c^2>0$ whenever $a+f\le 1$ 
  and $(1-a)(1-f)-c^2>0$ implies $af-c^2>0$ if $a+f\ge 1$ holds.
Since $\frac{2^{6}}{15}F_{1}^{2}G_{1,0}G_{1,\frac{1}{2}}=\frac{128}{45}\pi^{2}$
  the volume element corresponding to a fixed $a$ and $f$ can be expressed as
\begin{equation*}
 V(a,f) = \frac{128}{45}\pi^{2}
 \times
 \begin{cases}
  (a f)^{3/2} \left(5(1-a)(1-f)-af\right)
  & \text{if } a+f<1\\
  ((1 - a) (1 - f))^{3/2} (5 af-(1-a)(1-f))
  & \text{if } a+f\ge 1
 \end{cases}
\end{equation*}
  (see Figure \ref{Fig:vafR}) 
\begin{figure}[!ht]
 \centering
  \includegraphics[width = 0.75\textwidth]{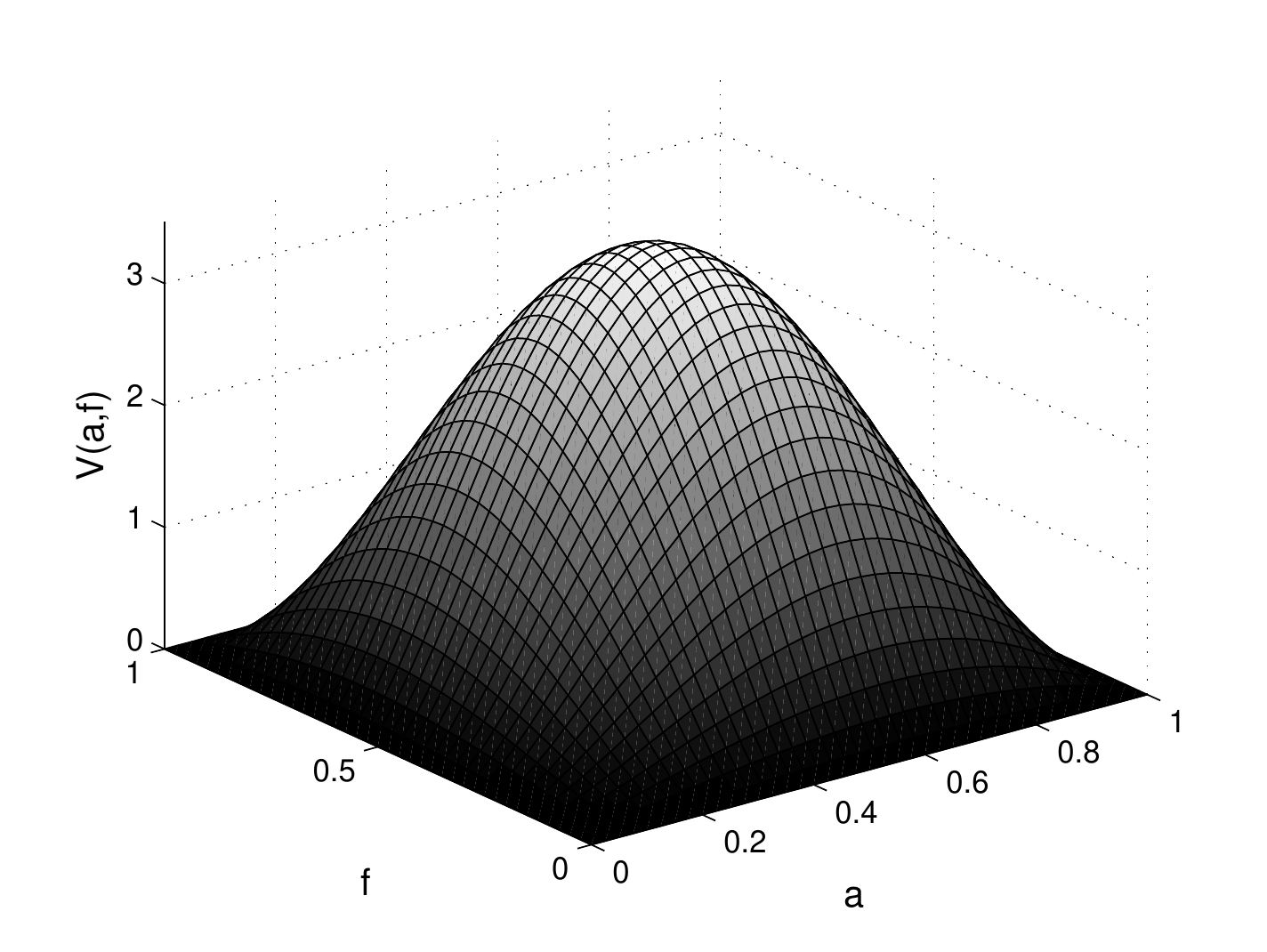}
  \caption{Graph of $V(a,f)$ for $\Q_\R$.}
  \label{Fig:vafR}  
\end{figure} 
  thus for the volume of $\Q_\R$ we have 
\begin{equation*}
 V(\Q_\R)=\int\limits_{[0,1]^2} V(a,f)\dint a\dint f
 = \frac{4\pi^3}{105}\approx 1.18119
\end{equation*}
  which completes the proof.
\end{proof}

\begin{theorem}
The volume of the space $\Q_\C$ with respect to the Lebesgue measure is
 \begin{equation*}
  V(\Q_\C)=\frac{2 \pi ^5}{4725},
 \end{equation*}
  and the distribution of volume over classical channels can be written as
 \begin{align*}
 V(a,f) &= \frac{16}{45}\pi^{5}\\
 &\times
 \begin{cases}
  a^3 f^3 [
  10((1-a)(1-f)-af)^2 +
  & \\
  15 af(1-a)(1-f)-9a^2f^2
  ]
  & \text{if } a+f< 1\\
 (1-a)^3 (1-f)^3 [
 10((1-a)(1-f)-af)^2 +
  & \\
  15 af(1-a)(1-f)-9(1-a)^2(1-f)^2
 ]
  & \text{if } a+f\ge 1.
 \end{cases}
\end{align*}
\end{theorem}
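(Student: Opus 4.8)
The plan is to run the same three–step scheme used for $\Q_\R$, now with complex off–diagonal entries, invoking the complex branch of Lemma~\ref{lem:integral} throughout. First I would fix the volume element. Parametrising $\Q_\C$ by \eqref{eq:matQ} with $a,f\in[0,1]$ real and $b,c,d,e,g\in\C$, the Hilbert--Schmidt line element $\Tr(\dint Q^\ast\,\dint Q)$ is diagonal in the twelve real coordinates, with every diagonal entry equal to $2$ except the two coordinates coming from $c$, where it equals $4$ (because $c$ occupies the four positions $(1,3),(3,1),(2,4),(4,2)$ of $Q$). Hence the Gram determinant is $2^{10}\cdot 4^{2}=2^{14}$ and the volume element is $2^{7}\dint\la_{12}$. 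As before I pass to $A=U^\ast QU$ of \eqref{eq:matA}, so that $A$ represents a point of $\Q_\C$ exactly when $a,f\in[0,1]$ and $\det(A_i)>0$ for $i=1,2,3,4$, and I hold $a$ and $f$ fixed.

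Next comes the nested integration. With $A_3$ fixed the free last--column entries are $d,g\in\C$, and the expansion $\det(A)=(1-f)\det(A_3)-\langle x_3+y,T_3(x_3+y)\rangle$, with $L_3=\Span\{e_1,e_2\}$, $x_3=(0,0,-c)^{T}$ and $T_3=\det(A_3)A_3^{-1}$, turns $\det(A)>0$ into the ellipsoid condition of Lemma~\ref{lem:integral}. The complex branch with $m=2$, $l=0$, together with Lemma~\ref{lem:proj} (which gives $\|z_0\|^2=\det(A_3)\,|c|^2/(1-a)$) and Jacobi's theorem (Lemma~\ref{lem:jac}, giving $\det(T_3|_{L_3})=(1-a)\det(A_3)$), yields
\begin{equation*}
 V(A_3)=2^{7}F_3 G_{3,0}\,\frac{\det(A_3)}{(1-a)^3}\bigl((1-a)(1-f)-|c|^2\bigr)_+^{2}.
\end{equation*}
Fixing $A_2$ and integrating over the remaining third--column entries $b,e\in\C$, I substitute $\det(A_3)=(1-a)\det(A_2)-\langle\beta,T_2\beta\rangle$ with $\beta=(b,e)$, so the integrand is a first power and the complex branch applies with $m=2$, $l=1$, $x=0$; using $\det(T_2)=\det(A_2)$ this gives
\begin{equation*}
 V(A_2)=2^{7}F_3^{2}G_{3,0}G_{3,1}\,\bigl((1-a)(1-f)-|c|^2\bigr)_+^{2}\,(af-|c|^2)^{2}.
\end{equation*}

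It remains to integrate over $c\in\C$. The same dichotomy noted in the real case shows that the effective disc is $|c|^2<af$ when $a+f<1$ and $|c|^2<(1-a)(1-f)$ when $a+f\ge1$, both factors being nonnegative there. Writing $p=(1-a)(1-f)$ and $q=af$ and passing to polar coordinates followed by the substitution $s=|c|^2$ reduces the angular integral to $2\pi$ and leaves
\begin{equation*}
 \int_0^{q}(p-s)^2(q-s)^2\dint s=\frac{q^3}{30}\bigl(10p^2-5pq+q^2\bigr),
\end{equation*}
whose bracket is exactly the polynomial displayed in the statement for $a+f<1$; the case $a+f\ge1$ follows by interchanging $p$ and $q$. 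With $F_3=2\pi^2$, $G_{3,0}=\tfrac14$ and $G_{3,1}=\tfrac1{12}$ the overall constant collapses to $\tfrac{2^{7}\pi}{30}F_3^{2}G_{3,0}G_{3,1}=\tfrac{16}{45}\pi^5$, the claimed prefactor of $V(a,f)$.

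Finally $V(\Q_\C)=\int_{[0,1]^2}V(a,f)\dint a\,\dint f$; the involution $(a,f)\mapsto(1-a,1-f)$ swaps the two cases and interchanges $p$ and $q$, so the two triangles contribute equally and one is left with an elementary Beta--integral over $\{a+f<1\}$, which evaluates so that $V(\Q_\C)=\tfrac{2\pi^5}{4725}$. I expect no conceptual obstacle here, since the argument is structurally identical to the real computation; the real work is the bookkeeping --- pinning down the factor $2^{7}$, the two determinant reductions $\det(T_3|_{L_3})=(1-a)\det(A_3)$ and $\det(T_2)=\det(A_2)$ through Jacobi's theorem, and the collection of the $\Gamma$--constants into the final prefactor.
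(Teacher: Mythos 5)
Your proposal is correct and follows essentially the same route as the paper: the same volume element $2^7\dint\la_{12}$, the same unitary conjugation to $A$, the same nested application of the complex branch of Lemma~\ref{lem:integral} (first over $(d,g)$ with $l=0$ via Lemma~\ref{lem:proj} and Jacobi, then over $(b,e)$ with $l=1$), and the same dichotomy $|c|^2<\min\bigl(af,(1-a)(1-f)\bigr)$ before the final $c$- and $(a,f)$-integrations, with all constants ($F_3=2\pi^2$, $G_{3,0}=\tfrac14$, $G_{3,1}=\tfrac1{12}$) collapsing to the correct prefactor $\tfrac{16}{45}\pi^5$ and total $\tfrac{2\pi^5}{4725}$. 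The only differences are cosmetic: you derive the Gram factor and spell out the polar-coordinate and $(a,f)\mapsto(1-a,1-f)$ symmetry steps that the paper leaves implicit.
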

\begin{proof}
The volume element corresponding to the parametrization \eqref{eq:matQ}
  in the complex case is $2^7 \dint \la_{12}$.
Similar to the real case, a matrix of the form \eqref{eq:matA} with complex 
  entries represents a point of $\Q_\C$ if and only if $a,f\in [0,1]$ and
 $\det (A_i) > 0$ for $i=1,2,3,4$. First we assume that $a$ and $f$ are given.
 
If $A_3$ is fixed, then by Lemma \ref{lem:integral} and Lemma \ref{lem:proj}
  we have
\begin{align*}
 V(A_3) &= \int\limits_{\E^\C (T_3,(1-f)\det (A_3),L_3,x_3)} 2^7 \dint \la_4
 \\
 &=\frac{2^7 F_{3}G_{3,0}\left((1-f)-\frac{|c|^2}{(1-a)}\right)_{+}^2 \det (A_3)^2}{\det (T_3|_{L_3})}
 \\
 &=
 \frac{2^7 F_{3}G_{3,0}}{(1-a)^3}
 \left((1-a)(1-f) - |c|^2\right)_{+}^2 
 \det (A_3),
\end{align*}
  where $L_3 = \Span \{(1,0,0)^T,(0,1,0)^T\}$ and $x_3 = (0,0,-c)^T$.

If $A_2$ is fixed, then
\begin{align*}
 & V(A_2) = \int\limits_{\E^\C (T_2,(1-a)\det (A_2),\C^2,0)}
 V(A_3)
 \dint \la_4 =\\
 &=
 \frac{2^7 F_{3}G_{3,0}}{(1-a)^3}
 \left((1-a)(1-f) - |c|^2\right)_{+}^2
 \\
 &\times
 \int\limits_{E^\C (T_2,(1-a)\det (A_2),\C^2,0)}
(1-a)\det (A_2)-\langle y,T_2 y\rangle
 \dint \la_4 (y)\\
 &= 
 2^7 F_3^2 G_{3,0} G_{3,1}
 \left((1-a)(1-f)-|c|^2\right)_{+}^2 \det (A_2)^2.
\end{align*}
 
The volume corresponding to a fixed $a$ and $f$ can be expressed as
\begin{align*}
 V(a,f) &= \frac{2^7}{60} F_1 F_3^2 G_{3,0} G_{3,1} \\
 &\times
 \begin{cases}
  a^3 f^3 [
  10((1-a)(1-f)-af)^2 +
  & \\
  15 af(1-a)(1-f)-9a^2f^2
  ]
  & \text{if } a+f< 1\\
 (1-a)^3 (1-f)^3 [
 10((1-a)(1-f)-af)^2 +
  & \\
  15 af(1-a)(1-f)-9(1-a)^2(1-f)^2
 ]
  & \text{if } a+f\ge 1
 \end{cases}
\end{align*}
  (see Figure \ref{Fig:vafC}) thus for the volume of $\Q_\C$ we have
\begin{equation*}
 V(\Q_\C) = \int\limits_{[0,1]^2} V(a,f)\dint a\dint f
 = \frac{2 \pi ^5}{4725} \approx 0.129532.
\end{equation*}
 \begin{figure}[!ht]
 \centering
  \includegraphics[width = 0.75\textwidth]{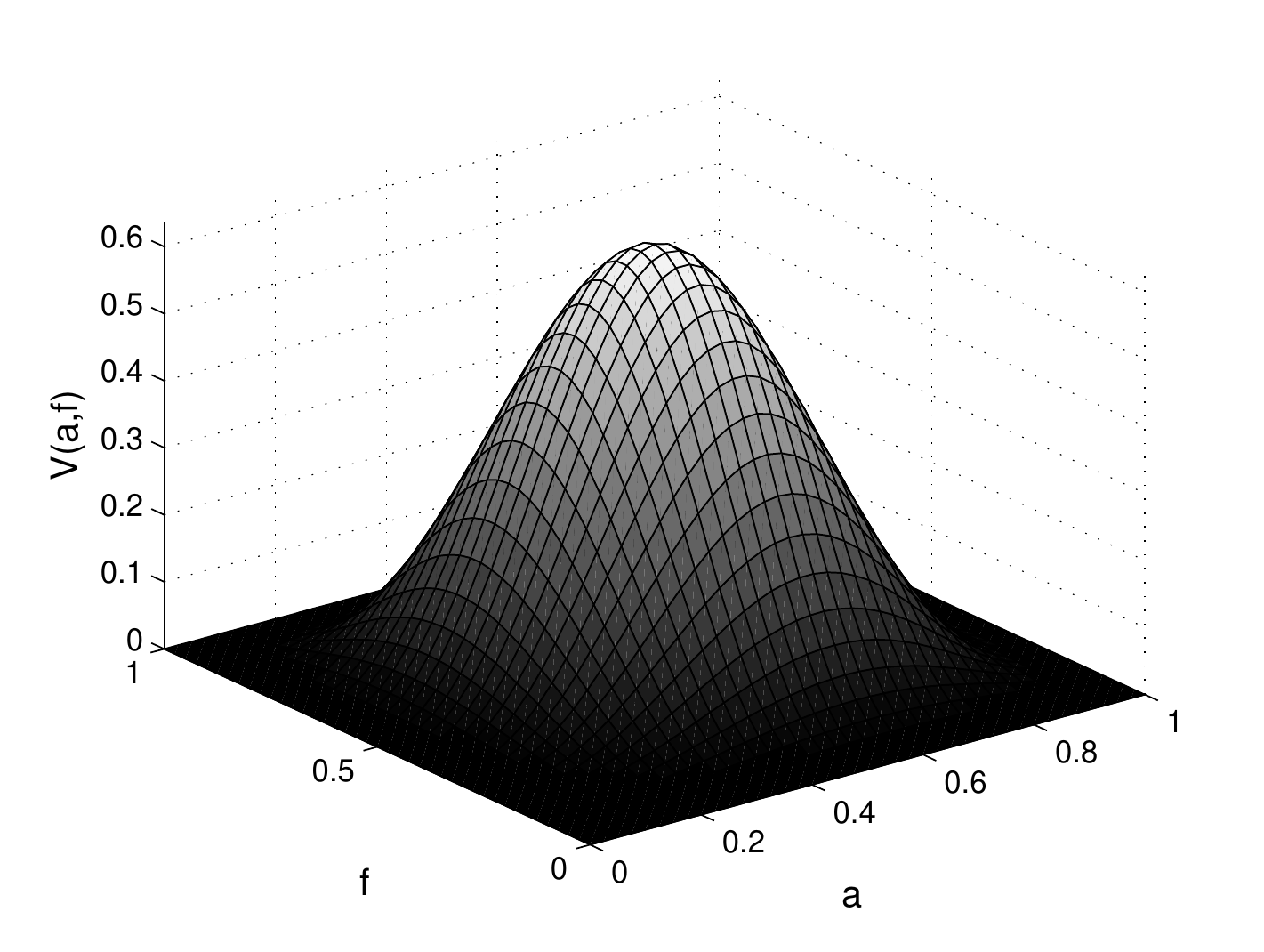}
  \caption{Graph of $V(a,f)$ for $\Q_\C$.}
  \label{Fig:vafC}  
\end{figure}
\end{proof}

\subsection{Unital qubit channels}

Identity preserving requires that $Q_{11}=Q_{22}=I$ in the Choi representation
  \eqref{eq:blk} which means that the space of unital qubit channels 
  with real and complex entries can be identified with convex subsets of
  $\R^5$ and $\R^9$, respectively.
We introduce the following notations for these sets.
\begin{align*}
 & \Q_\R^1 = \{Q\in \R^{4\times 4}|Q:\MR{2}\to\MR{2},Q> 0,Q(I)=I\}\\
 & \Q_\C^1 = \{Q\in \C^{4\times 4}|Q:\MC{2}\to\MC{2},Q> 0,Q(I)=I\}
\end{align*}
A general element can be parametrized as
\begin{equation}\label{eq:matQ2}
 Q = \left(
 \begin{array}{cccc}
  a &   b &   c &  d \\
  \bar{b} & 1-a &   e & -c \\
  \bar{c} &   \bar{e} & 1-a & -b \\
  \bar{d} &  -\bar{c} &  -\bar{b} &  a
 \end{array}
 \right),
\end{equation}
  where $a\in [0,1]$ and $Q>0$. 
Let us choose the unitary matrix
\begin{equation*}
  U = \left(
 \begin{array}{cccc}
  0 & 0 & 1 & 0 \\
  1 & 0 & 0 & 0 \\
  0 & 1 & 0 & 0 \\
  0 & 0 & 0 & 1
 \end{array}
 \right)
\end{equation*}
  and define the matrix $A$ as
\begin{equation}\label{eq:matA2}
 A = U^\ast Q U = \left(
 \begin{array}{cccc}
1-a &   e &   b & -c \\
  \bar{e} & 1-a &   c & -b \\
  \bar{b} &   \bar{c} &   a &  d \\
 -\bar{c} &  -\bar{b} &   \bar{d} &  a
 \end{array}
 \right)
\end{equation}
  which is positive definite if and only if $Q$ is positive definite.

\begin{lemma}\label{lem:complex}
Let us denote the left upper $k\times k$ submatrix of $A$ by $A_k$.
If $L_3 = \Span \{(0,0,1)^T\}$ and $M=\sqrt{A_3^{-1}}(L_3)$, then
$
 \sqrt{A_3^{-1}}P_{M^\perp}\sqrt{A_3^{-1}}
 = \left(
 \begin{array}{cc}
  A_2^{-1} & \mathbf{0} \\
  \mathbf{0}^T & 0
 \end{array}
 \right)$.
\end{lemma}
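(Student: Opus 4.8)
The plan is to exploit that $L_3$ is one-dimensional, so that $P_{M^\perp}$ is the identity minus a rank-one projection, and then to recognise the resulting expression as the difference between $A_3^{-1}$ and the Schur-complement part of its own block inverse. Write $B=\sqrt{A_3^{-1}}$, which is self-adjoint with $B^2=A_3^{-1}$, and put $e_3=(0,0,1)^T$, so that $M=\Span\{Be_3\}$ is one-dimensional. Then $P_{M^\perp}=I-P_M$ with the orthogonal projection $P_M=\|Be_3\|^{-2}(Be_3)(Be_3)^\ast$.

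The first step is to conjugate by $B$. Using $B^\ast=B$ and $B^2=A_3^{-1}$, together with $\|Be_3\|^2=\langle e_3,B^2e_3\rangle=(A_3^{-1})_{33}$, the rank-one term collapses to
\begin{equation*}
 B P_M B=\frac{(B^2e_3)(B^2e_3)^\ast}{\langle e_3,B^2e_3\rangle}=\frac{A_3^{-1}e_3\,e_3^\ast A_3^{-1}}{(A_3^{-1})_{33}},
\end{equation*}
so that $B P_{M^\perp}B=A_3^{-1}-(A_3^{-1})_{33}^{-1}A_3^{-1}e_3\,e_3^\ast A_3^{-1}$. This is the only step that needs an idea; everything afterwards is bookkeeping. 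Observe that $A_3^{-1}e_3$ is merely the third column of $A_3^{-1}$ and $(A_3^{-1})_{33}$ is its corner entry.

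Next I would substitute the explicit block inverse of $A_3$. Writing $A_3=\left(\begin{smallmatrix}A_2 & v\\ v^\ast & a\end{smallmatrix}\right)$ with $v=(b,c)^T$ and Schur complement $s=a-v^\ast A_2^{-1}v$ (a real scalar, since $A_2$ is self-adjoint), the standard formula gives the third column $A_3^{-1}e_3=(-s^{-1}A_2^{-1}v,\ s^{-1})^T$ and $(A_3^{-1})_{33}=s^{-1}$. Feeding these in, the rank-one correction becomes $\left(\begin{smallmatrix}s^{-1}A_2^{-1}vv^\ast A_2^{-1} & -s^{-1}A_2^{-1}v\\ -s^{-1}v^\ast A_2^{-1} & s^{-1}\end{smallmatrix}\right)$, which is exactly the off-diagonal blocks, the $(3,3)$ entry, and the Schur-complement part of the $A_2$-block of $A_3^{-1}=\left(\begin{smallmatrix}A_2^{-1}+s^{-1}A_2^{-1}vv^\ast A_2^{-1} & -s^{-1}A_2^{-1}v\\ -s^{-1}v^\ast A_2^{-1} & s^{-1}\end{smallmatrix}\right)$. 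Subtracting cancels all of these and leaves $\left(\begin{smallmatrix}A_2^{-1} & 0\\ 0 & 0\end{smallmatrix}\right)$, as claimed.

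As a consistency check that pins the answer down independently, one can note that $\ker(BP_{M^\perp}B)=L_3=\Span\{e_3\}$: indeed $BP_{M^\perp}Bx=0$ forces $Bx\in M=B(L_3)$, hence $x\in L_3$, and this matches the kernel of $\left(\begin{smallmatrix}A_2^{-1} & 0\\ 0 & 0\end{smallmatrix}\right)$ because $A_2^{-1}$ is invertible. The main obstacle here is purely notational, namely keeping the block-inverse entries and their conjugates straight; there is no analytic difficulty, and Remark~\ref{rem:1} (which identifies $M^\perp=\sqrt{A_3}(L_3^\perp)$) offers an equivalent but slightly longer alternative route.
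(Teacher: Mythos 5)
Your proof is correct, but it takes a genuinely different route from the paper's. The paper never inverts $A_3$ explicitly: it invokes Remark \ref{rem:1} to get $M^\perp = \sqrt{A_3}(L_3^\perp)$, picks $u_1,u_2\in L_3^\perp$ with $\left<u_i,A_3u_j\right>=\delta_{ij}$ so that $\{\sqrt{A_3}u_1,\sqrt{A_3}u_2\}$ is an orthonormal basis of $M^\perp$, concludes $\sqrt{A_3^{-1}}P_{M^\perp}\sqrt{A_3^{-1}}=u_1\otimes u_1+u_2\otimes u_2$, and then evaluates this sum by choosing $u_i=\sqrt{B^{-1}}e_i$ for the auxiliary block-diagonal matrix $B=\left(\begin{smallmatrix} A_2 & \mathbf{0}\\ \mathbf{0}^T & 1\end{smallmatrix}\right)$, which agrees with $A_3$ as a quadratic form on $L_3^\perp$. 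You instead exploit that $M$ itself is one-dimensional: writing $P_{M^\perp}=I-P_M$ with the rank-one projection onto $\Span\{\sqrt{A_3^{-1}}e_3\}$, conjugation reduces the lemma to the matrix identity
\begin{equation*}
A_3^{-1}-(A_3^{-1})_{33}^{-1}\,A_3^{-1}e_3\,e_3^\ast A_3^{-1}
=\left(\begin{array}{cc} A_2^{-1} & \mathbf{0}\\ \mathbf{0}^T & 0\end{array}\right),
\end{equation*}
which you verify from the Schur-complement block-inverse formula. Both arguments are complete. Yours is more computational but fully self-contained (it does not even need Remark \ref{rem:1}) and makes transparent that the lemma is exactly the classical quotient identity relating the inverse of a principal submatrix to the rank-one-corrected inverse of the full matrix; the paper's argument is lighter on bookkeeping, avoids any explicit inverse, and reuses machinery (Remark \ref{rem:1}) already deployed for Lemma \ref{lem:proj}. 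One caveat: your closing kernel argument is only a consistency check and carries no logical weight --- knowing $\ker(BP_{M^\perp}B)=L_3$ does not determine the operator on $L_3^\perp$ --- so the proof stands (and is fine) on the Schur-complement computation alone.
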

\begin{proof}
According to Remark \ref{rem:1} $M^\perp = \sqrt{A}(L^\perp)$.
If $u_1$ and $u_2$ are vectors in $L_3^\perp$ for which
 $\left<u_i,A_3 u_j\right> = \delta_{ij}$ holds, then 
 $\{\sqrt{A_3}u_1,\sqrt{A_3}u_2\}$ is an orthonormal basis of $M^\perp$
 hence 
 \begin{equation*}
 P_{M^\perp}=\sqrt{A_3}u_1 \otimes\sqrt{A_3}u_1 
  +\sqrt{A_3}u_2\otimes\sqrt{A_3}u_2
 = \sqrt{A_3} (u_1\otimes u_1 + u_2\otimes u_2) \sqrt{A_3}
 \end{equation*}
  which implies that 
$\sqrt{A_3^{-1}}P_{M^\perp}\sqrt{A_3^{-1}} 
  = u_1\otimes u_1 + u_2\otimes u_2$.
Let us define the matrix
$B = 
  \left(
  \begin{array}{cc}
   A_2 & \mathbf{0} \\
   \mathbf{0}^T & 1
  \end{array}
  \right).$
It is easy to see that $\left<x,By\right>=\left<x,A_3y\right>$ holds for each
  $x,y\in L_3^\perp$.
We can choose $u_i=\sqrt{B^{-1}}e_i$, $i=1,2$, where $(e_i)_j = \delta_{ij}$, 
  $i,j=1,2$ is the standard basis of $L_3^\perp$.
We can write
$u_1\otimes u_1 + u_2\otimes u_2 
=\sqrt{B^{-1}}(e_1\otimes e_1 + e_2\otimes e_2)\sqrt{B^{-1}}
=\left(\begin{array}{cc}
  A_2^{-1} & \mathbf{0} \\
  \mathbf{0}^T & 0
 \end{array}\right)$ which completes the proof.
\end{proof}

\begin{theorem}
The volume of the space $\Q_\R^1$ with respect to the Lebesgue measure is
 \begin{equation*}
  V(\Q_\R^1) = \frac{4\pi^2}{15},
 \end{equation*}
  and the distribution of volume over classical channels can be written as 
 \begin{equation*}
  V(a) = 8\pi^{2} a^2 (1-a)^2.
 \end{equation*}
\end{theorem}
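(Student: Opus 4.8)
The plan is to mirror the three–stage strategy of the two preceding theorems, now applied to the unital parametrization \eqref{eq:matA2}. First I would fix the volume element: the Hilbert--Schmidt metric that \eqref{eq:matQ2} induces on the parameters $(a,b,c,d,e)$ is the diagonal tensor $\mathrm{diag}(4,4,4,2,2)$, so the volume element is $\sqrt{4\cdot 4\cdot 4\cdot 2\cdot 2}\,\dint\la_5=2^4\,\dint\la_5$, exactly as in the general real case. Since the underlying classical channel depends only on $a$, I would fix $a\in[0,1]$ and then integrate out $b,c,d,e$ in the nested order $d$, then $(b,c)$, then $e$, using throughout that $A>0$ is equivalent to $\det(A_i)>0$ for $i=1,2,3,4$.

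For the innermost step I would fix $A_3$ and integrate over the single free entry $d$ of the last column $(-c,-b,d,a)^T$ of \eqref{eq:matA2}. This is precisely the setting of Lemma \ref{lem:integral} with $L_3=\Span\{(0,0,1)^T\}$, $x_3=(-c,-b,0)^T\in L_3^\perp$, $T_3=\det(A_3)A_3^{-1}$ and $\mu=a\det(A_3)$, the defining inequality $\left<x_3+y,T_3(x_3+y)\right><\mu$ being exactly $\det(A_4)>0$. The essential difference from the general case is that here $\dim L_3^\perp=2$, so Lemma \ref{lem:proj} no longer applies; instead Lemma \ref{lem:complex} supplies $\sqrt{A_3^{-1}}P_{M^\perp}\sqrt{A_3^{-1}}=\mathrm{diag}(A_2^{-1},0)$, from which
\begin{equation*}
 \|z_0\|^2=\det(A_3)\left<(c,b),A_2^{-1}(c,b)\right>=\det(A_3)\left<(b,c),A_2^{-1}(b,c)\right>,
\end{equation*}
the last equality holding because $A_2^{-1}$ is symmetric with equal diagonal entries. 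Combining this with the cofactor identity $\det(A_3)=a\det(A_2)-\left<(b,c),T_2(b,c)\right>$ (Lemma~3), where $T_2=\det(A_2)A_2^{-1}$, I expect the clean collapse $\mu-\|z_0\|^2=\det(A_3)^2/\det(A_2)$; together with $\det(T_3|_{L_3})=\det(A_2)$ (the corollary to Lemma \ref{lem:jac}) and $F_0G_{0,0}=2$ this gives $V(A_3)=2^5\,\det(A_3)/\det(A_2)$.

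Next I would fix $A_2$ and integrate $V(A_3)$ over $(b,c)$ on the ellipse $\det(A_3)>0$. Since $V(A_3)=2^5\bigl(\mu'-\left<(b,c),T_2(b,c)\right>\bigr)/\det(A_2)$ with $\mu'=a\det(A_2)$, a second application of Lemma \ref{lem:integral} (now with $m=2$, $l=1$ and $x=0$, so $z_0=0$) should yield $V(A_2)=2^5F_1G_{1,1}\,a^2\sqrt{\det(A_2)}$. Finally I would integrate over $e$, using $\det(A_2)=(1-a)^2-e^2$; the elementary semicircle integral $\int_{-(1-a)}^{1-a}\sqrt{(1-a)^2-e^2}\,\dint e=\frac{\pi}{2}(1-a)^2$ then delivers $V(a)=2^5F_1G_{1,1}\cdot\frac{\pi}{2}\,a^2(1-a)^2=8\pi^2a^2(1-a)^2$ after substituting $F_1=2\pi$ and $G_{1,1}=\frac14$. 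Integrating once more and using the beta integral $\int_0^1a^2(1-a)^2\,\dint a=\frac{1}{30}$ gives $V(\Q_\R^1)=8\pi^2/30=4\pi^2/15$.

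The main obstacle is the second paragraph: correctly evaluating $\|z_0\|^2$ through Lemma \ref{lem:complex} and confirming the simplification $\mu-\|z_0\|^2=\det(A_3)^2/\det(A_2)$. This is exactly where the symmetry of the unital form --- the repeated diagonal block $1-a$ in $A_2$ and the $\pm$ pattern of the off--diagonal entries --- does the work, and it is the reason $V(a)$ here is a single polynomial rather than the piecewise expression that appeared for general channels.
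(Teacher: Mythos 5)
Your proposal is correct and follows essentially the same route as the paper: the same volume element $2^4\dint\la_5$, the same nested integration order ($d$, then $(b,c)$, then $e$), and the same use of Lemma \ref{lem:integral} together with Lemma \ref{lem:complex} in place of Lemma \ref{lem:proj}, with all constants ($F_0G_{0,0}=2$, $F_1=2\pi$, $G_{1,1}=\tfrac14$) checking out. The only cosmetic difference is that you collapse $\mu-\|z_0\|^2=\det(A_3)^2/\det(A_2)$ already at the $V(A_3)$ stage, whereas the paper performs the identical simplification one step later inside the $V(A_2)$ integral.
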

\begin{proof}
The volume element corresponding to the parametrization
  \eqref{eq:matQ2} in the real case is  $2^4\dint\la_5$.
A matrix of the form \eqref{eq:matA2} with real entries represents a point 
  of $\Q_\R^2$ if and only if $a\in [0,1]$ and $\det (A_i)>0$ for $i=1,2,3,4$. 
First we assume that $a$ is given.
 
If $A_3$ is fixed, then by Lemma \ref{lem:integral} and 
  Lemma \ref{lem:complex}, we have
 \begin{align*}
  V(A_3) &= \int\limits_{\E^\R (T_3,a\det (A_3),L_3,x_3)} 2^4\dint\la_1 =\\
  &=\frac{2^4 F_0}{\sqrt{\det (A_2)}}
  \left(a-\left< x_3,
 \left(
 \begin{array}{cc}
  A_2^{-1} & \mathbf{0} \\
  \mathbf{0}^T & 0
 \end{array}
 \right) x_3\right>\right)_{+}^{1/2}\det (A_3)^{1/2},
 \end{align*}
  where $L_3 = \Span \{(0,0,1)^T\}$ and $x_3 = (-c,-b,0)^T$.
 
 Observe that $\left< x_3,
 \left(
 \begin{array}{cc}
  A_2^{-1} & \mathbf{0} \\
  \mathbf{0}^T & 0
 \end{array}
 \right) x_3\right>
 =
 \left<y,\sigma_1 A_2^{-1}\sigma_1 y\right> =
 \left<y, A_2^{-1} y\right>
 $, where $y = (b,c)^T$ because 
 $\sigma_1 A_2^{-1}\sigma_1 = \overline{A_2^{-1}}$
   and $A_2^{-1}$ is a matrix with real entries.
 
 If $A_2$ is fixed, then  
 \begin{align*}
  V(A_2) &= \int\limits_{\E^\R (T_2,a\det (A_2),\R^2,0)}
  V(A_3) \dint\la_2 \\
  & =  
  \frac{2^4 F_0 }{\sqrt{\det (A_2)}}
  \int\limits_{\E^\R (T_2,a\det (A_2),\R^2,0)}
  \left(a-
  \left< 
  y, A_2^{-1} y
  \right>
 \right)_{+}^{1/2}\det (A_3)^{1/2}
  \dint\la_2 (y)\\
  & =
  \frac{2^4 F_0 }{\det (A_2)}
  \int\limits_{\E^\R (T_2,a\det (A_2),\R^2,0)}
  a\det (A_2)-
  \left< 
  y,  T_2y
  \right>
 \dint\la_2 (y) \\
 & = 2^4F_0F_1G_{1,1}a^2\left(\det (A_2)\right)^{1/2}.
 \end{align*}
 
The volume corresponding to a fixed $a\in [0,1]$ can be written as
 \begin{align*}
  V(a) &=
  2^4F_0F_1G_{1,1}a^2
  \int\limits_{-(1-a)}^{1-a}
  \sqrt{(1-a)^2-e^2}
  \dint\la_1(e) \\
  &= 2^5 F_0F_1 G_{0,\frac{1}{2}} G_{1,1} a^2 (1-a)^2=8\pi^{2}a^2 (1-a)^2
 \end{align*}
  (see Figure \ref{Fig:unital}) thus the volume of $\Q_\R^1$ is
\begin{equation*}
 V(\Q_\R^1) =
 8\pi^{2} \int\limits_0^1 a^2 (1-a)^2  \dint a=\frac{4\pi^2}{15}
  \approx 2.63189
\end{equation*}
  which completes the proof.
\end{proof}

\begin{theorem}
The volume of the space $\Q_\C^1$ with respect to the Lebesgue measure is
 \begin{align*}
  V(\Q_\C^1) &=\frac{2\pi^4}{315},
 \end{align*}
  and the distribution of volume over classical channels can be written as 
 \begin{equation*}
  V(a) = 2^2\pi^4 a^4 (1-a)^4.
 \end{equation*}
\end{theorem}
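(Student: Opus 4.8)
My plan is to mirror the computation carried out for $\Q_\R^1$, working with the unitarily equivalent matrix $A$ of \eqref{eq:matA2}, whose positive definiteness is controlled by the nested conditions $a\in[0,1]$ and $\det(A_i)>0$ for $i=1,\dots,4$. The same reasoning that produces the volume elements $2^4\dint\la_5$ and $2^7\dint\la_{12}$ in the real unital and complex general cases gives $2^7\dint\la_9$ here, so I would fix $a$ and integrate out the free entries of $A$ in the order $d$, then $(b,c)$, then $e$. Integrating $d$ is the one--complex--dimensional step: with $L_3=\Span\{(0,0,1)^T\}$, $x_3=(-c,-b,0)^T$ and $l=0$, Lemma \ref{lem:integral} together with Lemma \ref{lem:complex} (which gives $\|z_0\|^2=\det(A_3)\langle x_3,\mathrm{diag}(A_2^{-1},\mathbf 0)x_3\rangle$) yields
\[
 V(A_3)=2^7\frac{F_1G_{1,0}}{\det(A_2)}\,\det(A_3)\,\bigl(a-\langle(c,b),A_2^{-1}(c,b)\rangle\bigr)_+ .
\]

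The main obstacle appears in the next step. In the real unital case the quadratic form coming from $\|z_0\|^2$ coincides with the one in $\det(A_3)=\det(A_2)\bigl(a-\langle(b,c),A_2^{-1}(b,c)\rangle\bigr)$, because $\sigma_1A_2^{-1}\sigma_1=\overline{A_2^{-1}}=A_2^{-1}$ for real $A_2$. Over $\C$ this fails: the $\|z_0\|^2$--form carries the swapped argument $(c,b)$, hence equals $\langle(b,c),\overline{A_2^{-1}}(b,c)\rangle$, which differs from $\langle(b,c),A_2^{-1}(b,c)\rangle$ as soon as $e\notin\R$. Thus, instead of a clean power of one form, I must integrate the product of two genuinely different quadratic forms,
\[
 V(A_2)=2^7F_1G_{1,0}\int (a-Q_2)(a-Q_1)_+\dint\la_4(b,c),
\]
where $Q_2=\langle(b,c),A_2^{-1}(b,c)\rangle$ and $Q_1=\langle(b,c),\overline{A_2^{-1}}(b,c)\rangle$; since $Q_1$ and $Q_2$ have different principal axes, $V(A_2)$ will depend on $\arg e$ and not merely on $\det(A_2)$, which is exactly what has no counterpart in the real proof.

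To handle this I would substitute $(b,c)=\sqrt{A_2}\,v$ (real Jacobian $\det(A_2)$), turning $Q_2$ into $\|v\|^2$ and $Q_1$ into $\langle v,Bv\rangle$ with $B=\sqrt{A_2}\,\overline{A_2^{-1}}\sqrt{A_2}$ Hermitian, $\det B=1$, and eigenvalues $\mu,\mu^{-1}$. Diagonalising $B$ and passing to $s=|v_1|^2$, $t=|v_2|^2$ (measure $\pi^2\dint s\dint t$) reduces the $(b,c)$--integral to an elementary integral of a product of two linear functions over a polygon, which I expect to collapse to the closed form $\tilde I(\mu)=\dfrac{x(3-x)}{12(1+x)}$ with $x=\min(\mu,\mu^{-1})$. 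I would then integrate over $e$: writing $\mu=e^{\psi}$ and $p=1-a$, the eigenvalues of $B$ are fixed by $\cosh\psi=\bigl(p^2-|e|^2\cos(2\arg e)\bigr)/\bigl(p^2-|e|^2\bigr)$, so the angular part becomes a combination of the standard integrals $\int_0^\pi(u-\cos\theta)^{-n}\dint\theta$ for $n=1,2,3$; these telescope, the surviving radial integral is elementary, and after restoring the constant $2^7F_1G_{1,0}=2^7\pi$ one arrives at $V(a)=4\pi^4a^4(1-a)^4$.

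The argument then closes exactly as in the other three theorems: integrating the distribution of volume over the classical channel,
\[
 V(\Q_\C^1)=\int_0^1 4\pi^4 a^4(1-a)^4\dint a=4\pi^4\,\frac{\Ga(5)\Ga(5)}{\Ga(10)}=\frac{4\pi^4}{630}=\frac{2\pi^4}{315}.
\]
I expect the only genuinely new difficulty, relative to the real unital case, to be the bookkeeping of the two distinct forms $Q_1\ne Q_2$ and the ensuing angular integration over $\arg e$; every other step is a direct transcription of the earlier computations.
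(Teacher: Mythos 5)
Your proposal is correct and follows essentially the same route as the paper: the same parametrization \eqref{eq:matA2}, the same integration order ($d$, then $(b,c)$, then $e$), the same identification of the two distinct quadratic forms $\langle y,A_2^{-1}y\rangle$ and $\langle y,\overline{A_2^{-1}}y\rangle$, and the same substitution $\sqrt{A_2}$ followed by diagonalisation of $B$; indeed your closed form $\tilde I(\mu)=\frac{x(3-x)}{12(1+x)}$ with $x=\min(\mu,\mu^{-1})$ coincides with the paper's $\frac{3\lambda-1}{12\lambda(1+\lambda)}$, and your $\cosh\psi$ formula is exactly the paper's expression for $\lambda$. The only cosmetic difference is the final $e$-integration, which the paper carries out directly in polar coordinates rather than via the hyperbolic parametrization $\mu=e^{\psi}$, and both yield $V(a)=2^2\pi^4a^4(1-a)^4$ and hence $V(\Q_\C^1)=\frac{2\pi^4}{315}$.
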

\begin{proof}
The volume element corresponding to the parametrization \eqref{eq:matQ2} 
  in the complex case is $2^7\dint\la_9$.
Similar to the real case, a matrix of the form \eqref{eq:matA2} with complex 
  entries  represents a point of $\Q_\R^2$ if and only if $a\in [0,1]$
  and $\det (A_i)>0$ for $i=1,2,3,4$. 
First we assume that $a$ is given.
 
If $A_3$ is fixed, then by Lemma \ref{lem:integral} and 
  Lemma \ref{lem:complex}, we have
\begin{align*}
  V(A_3) &= \int\limits_{\E^\C (T_3,a\det (A_3),L_3,x_3)} 2^7\dint\la_2 =\\
  &=\frac{2^6 F_1}{\det(A_2)} \left(a-\left< x_3,
  \left(\begin{array}{cc}
  A_2^{-1} & \mathbf{0} \\
  \mathbf{0}^T & 0
 \end{array}\right) x_3\right>\right)_{+}\det (A_3),
\end{align*}
  where $L_3 = \Span \{(0,0,1)^T\}$ and $x_3 = (-c,-b,0)^T$.

 Similar to the real case $\left< x_3,
 \left(
 \begin{array}{cc}
  A_2^{-1} & \mathbf{0} \\
  \mathbf{0}^T & 0
 \end{array}
 \right) x_3\right>
 =\left<y,\sigma_1 A_2^{-1}\sigma_1 y\right>$,
   where $y = (b,c)^T$, but
 $\left<y,\sigma_1 A_2^{-1}\sigma_1 y\right>
 \ne \left<y, A_2^{-1} y\right>$
   because $A_2^{-1}\ne\overline{A_2^{-1}}$ in the complex case.
 
If $A_2$ is fixed, then
 \begin{align*}
  &V(A_2) = \int\limits_{\E^\C (T_2,a\det (A_2),\C^2,0)}
  V(A_3) \dint\la_4 =\\
  &= 2^6 F_1
  \int\limits_{\E^\C (T_2,a\det (A_2),\C^2,0)}
  \left(
  a-\left<y,\sigma_1 A_2^{-1}\sigma_1 y\right>
  \right)_{+}(a-\left<y,A_2^{-1} y\right>)
 \dint\la_4(y).
 \end{align*}
Let us substitute $y = \sqrt{a}\sqrt{A_2}z$ and obtain
 \begin{equation*}
 V(A_2)=2^6 F_1 a^4 \det(A_2)
 \int\limits_{\{z:||z||<1\}}
 \left(1-\left<z,Bz\right>\right)_{+}
 (1-||z||^2) \dint\la_4(z),
 \end{equation*}
  where $B=\sqrt{A_2}\sigma_1 A_2^{-1}\sigma_1\sqrt{A_2}$ is a self-adjoint 
  matrix that is unitary equivalent to a diagonal matrix and $\det (B)=1$.
As a unitary coordinate transformation does not change the value of 
  the previous integral hence
\begin{align*}
 V(A_2) &= 2^6 F_1 a^4 \det(A_2)\\
 & \times
 \int\limits_{\{z:||z||<1\}}
 \left(1-\la |z_1|^2-\frac{1}{\la} |z_2|^2\right)_{+}
 (1-|z_1|^2-|z_2|^2) \dint\la_4(z),
\end{align*}
  where $\la$ denotes the largest eigenvalue of $B$.
Then we compute the integral above in the Descartes product of two polar   
  coordinate systems.
The integral with respect to the angles gives $F_1^2$ and the radial part
  can be written as
\begin{align*}
 V(A_2) &= 2^6 F_1^3 a^4 \det(A_2)
 \int\limits_{\R_{+}^2}
 \left(1-\la r_1^2-\frac{1}{\la} r_2^2\right)_{+}
 (1-r_1^2-r_2^2)_{+} r_1 r_2\dint r_1 \dint r_2 \\
 &= \frac{2^5\pi^3}{3}a^4 \det(A_2) 
 \frac{3\la-1}{\la (1+\la)}.
\end{align*}
By elementary matrix computation, we get
\begin{equation*}
\la = 1+\frac{2\Im(e)^2}{\det (A_2)}
+ \sqrt{\left(1+\frac{2\Im(e)^2}{\det (A_2)}\right)^2-1}
\end{equation*}
  thus
\begin{equation*}
 V(A_2) =
 \frac{2^5\pi^3}{3} a^4 \det(A_2) 
 \left(
 1+\frac{2\Im (e)^2}{\det (A_2)}
 \left( 
 \sqrt{\frac{\Im (e)^2}{\det (A_2)+\Im (e)^2}}-1
 \right)
 \right).
\end{equation*}

The volume corresponding to a fixed $a\in [0,1]$ can be written as
 \begin{align*}
  V(a) &=
  \int\limits_{|e|^2\le (1-a)^2}
  V(A_2)
  \dint\la_2(e) \\
  &=
  \frac{2^5\pi^3}{3} a^4 (1-a)^4 \\
  &\times
  \int\limits_0^1
  \int\limits_0^{2\pi}
  \left(
  1+
  \frac{2r^2\sin^2\phi}{1-r^2}
  \left(
  \sqrt{\frac{r^2\sin^2\phi}{1-r^2\cos^2\phi}}-1
  \right)
  \right)
  (1-r^2)r
  \dint \phi
  \dint r \\
  & = 2^2\pi^4 a^4 (1-a)^4
 \end{align*} 
(see Figure \ref{Fig:unital}) thus the volume of $\Q_\C^1$ is
\begin{equation*}
 V(\Q_\C^1) =
 2^2\pi^4 
 \int\limits_0^1
 a^4 (1-a)^4
 \dint a
 = \frac{2\pi^4}{315}
 \approx 0.61847
\end{equation*}
  which completes the proof.
\end{proof}
\begin{figure}[!ht]
\centering
 \includegraphics[width = 0.7\textwidth]{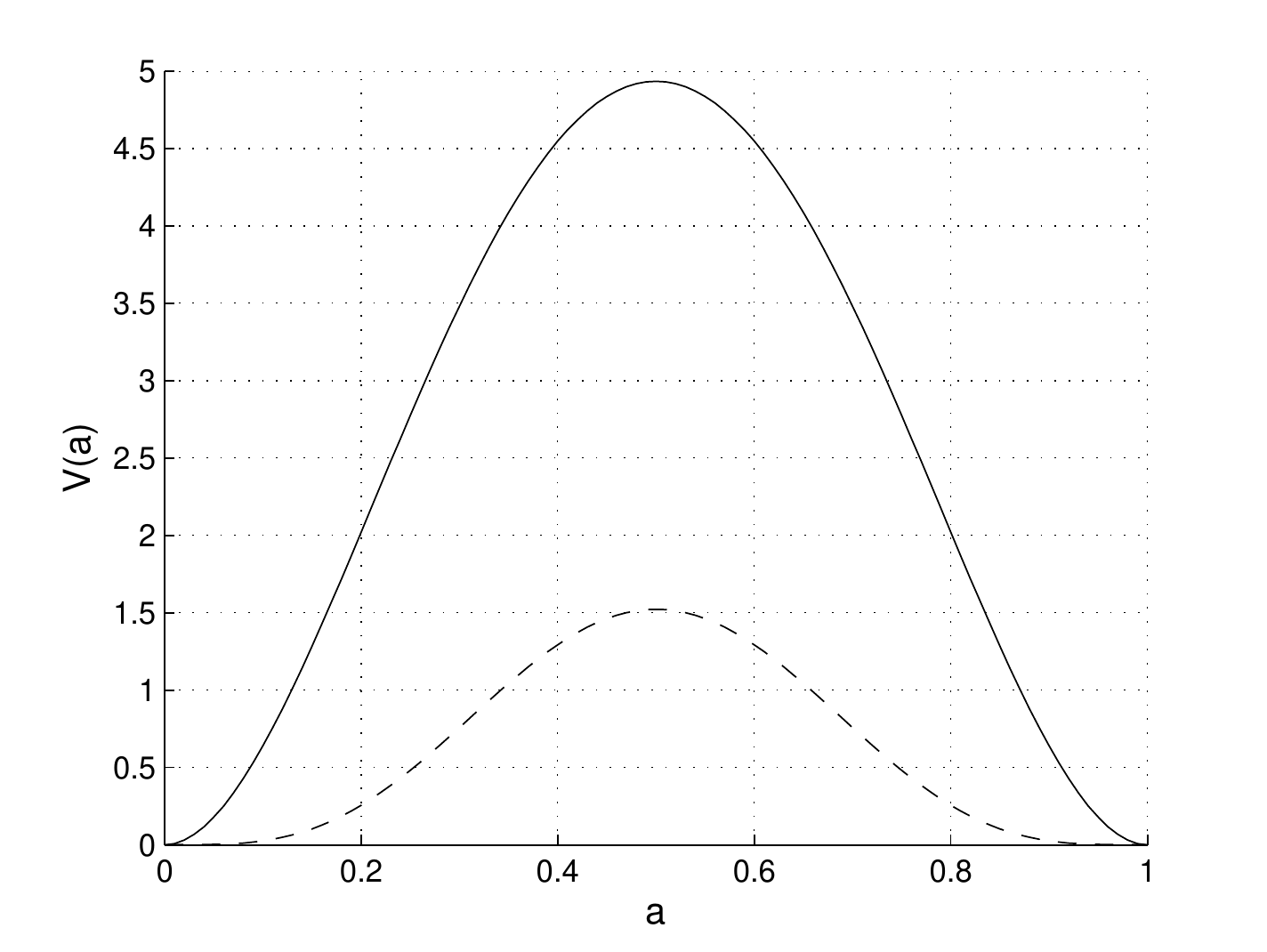}
\caption{Graph of $V(a)$ 
for $\Q_\R^1$ (solid) 
and 
$\Q_\C^1$ (dashed).} 
\label{Fig:unital}
\end{figure}

One might think about the generalization of the presented results,
  although in a more general setting several complications occur.
For example, in the case of unital qubit channels one should integrate
  over the Birkhoff polytope, which would cause difficulties since
  even the volume of the polytope is still unknown \cite{IgorPak}.

\section{The trace-norm contraction coefficient}

The way of integration presented in the previous sections suggests an efficient
  method for generating uniformly distributed points in the space of qubit 
  channels.
This method makes the numerical study of different channel related quantities
  possible.
As an example, the distribution of $\eta^{\Tr}$ is investigated numerically by
  Monte-Carlo simulations over different kind of quantum channels.

\subsection{Monte-Carlo simulations}

Simulations were implemented in MATLAB 2014a and random vectors within a 
  sphere were generated, as described by Knuth \cite{Knuth}.

\begin{algorithm}\label{alg:alg}
The next scheme describes for the case of $\Q_\R$ how the algorithm works, 
  where $x\sim\mathcal{U}(B)$ denotes that $x$ is uniformly distributed on the
  set $B$.
The other cases ($\Q_\C$, $\Q_\R^1$ and $\Q_\C^1$) can be treated in a 
  similar way.
\begin{center}
 \begin{tabular}{ll}
  \hline\hline \vspace{-6pt}\\
  \textbf{Step 1:} &  Generate $a,f\sim\mathcal{U}([0,1])$ independently. \medskip\\
  \textbf{Step 2:} &  Generate $x_1 \sim\mathcal{U}(-\sqrt{af},\sqrt{af})$ and set 
                          $A_2 = \left(\begin{array}{cc}
					  a & x_1 \\
					  x_1 & f
                                        \end{array}
				  \right)$. \medskip\\
  \textbf{Step 3:} &  Generate $y_2\sim\mathcal{U}(\{r\in\R^2:||r||\le\sqrt{f}\})$ and \\
                   &  set $A_3 = \left(
                   \begin{array}{cc}
                    A_2 & x_2 \\
                    x_2^T & 1-a
                   \end{array}
                   \right)
                   $, where $x_2=\sqrt{A_2}y_2$.\medskip\\
  \textbf{Step 4:} & Compute the projection $P$ onto the subspace
  $\Span (\{\sqrt{A_3}\,e_3\})$\\ 
                   & and set $z=-x_1 (1-f)^{-1/2}P\sqrt{A_3^{-1}}\,e_3$.\medskip\\                
  \textbf{Step 5:} &  \textbf{If} $||z|| > 1$, \textbf{then} 
                      \textbf{goto} Step 2. \medskip\\
  \textbf{Step 6:} & Generate $y_3\sim\mathcal{U}(\{r\in\R^2:||r||\le\sqrt{1-||z||^2}\})$ and \\
                   & set $A=\left(
                   \begin{array}{cc}
                    A_3 & x_3 \\
                    x_3^T & 1-f
                   \end{array}
                   \right)$, where $x_3=\sqrt{1-f}\sqrt{A_3}([e_1,e_2]y_3+z)$.\medskip\\                   
  \textbf{Step 7:} & Apply the transform $Q=UAU^\ast$, where $U$ is given by
  \eqref{eq:unit}. \medskip\\
  \hline
 \end{tabular}
\end{center} 
The first step is omitted when our goal is to generate a random qubit channel 
  over the classical channel parametrized by $a$ and $f$. 
Step 5 is needed just because  up to this point it was not guaranteed that
  $c^2\le (1-a)(1-f)$.
\end{algorithm}

Any $\rho\in\M{2}$ can be represented in the Pauli bases as
  $\rho = \frac{1}{2}(I+x\cdot{\sigma})$ by a unique 
  $x=(x_1,x_2,x_3)^T\in\R^3$ with $||x||\le 1$, where 
  $x\cdot {\sigma} = \sum\limits_{j=1}^3 x_i \sigma_i$.
A qubit channel $Q:\M{2}\to\M{2}$ is represented in the Pauli bases as
\begin{equation*}
 Q\left(\frac{1}{2}(I+x\cdot\sigma)\right) = \frac{1}{2}\left(I+(v+Tx)\cdot{\sigma}\right),
\end{equation*}
  where $v\in\R^3$ and $T$ is a $3\times 3$ real matrix.
This representation is suitable for calculating the trace-norm 
  contraction coefficient because $\eta^{\Tr}(Q)$ can be expressed
  as $\eta^{\Tr}(Q)=||T||_\infty$, where $||.||_\infty$
  denotes the Schatten-$\infty$ norm \cite{NielsenChuang}. 
It means that the trace distance contraction coefficient of a 
  qubit channel $Q$ given by \eqref{eq:matQ} is the largest singular value
  of the following matrix. 
\begin{equation}\label{eq:matT}
 T=
 \left(
 \begin{array}{ccc}
  \Re (d+e) & \Im (d+e) & \Re (b-g)  \\
 -\Im (d-e) & \Re(d-e)  & -\Im (b-g) \\
   2\Re (c) & 2\Im (c)  & a-f
 \end{array}
 \right)
\end{equation}

\subsection{Distribution of $\eta^{\Tr}$ on the whole space}

Empirical cumulative distribution functions (CDF) of
  $\eta^{\Tr}$ on the space of qubit channels are presented 
  in Figure \ref{Fig:F1234} for $\Q_\R,\Q_\C,\Q_\R^1$ and $\Q_\C^1$. 
In each case, $10^4$ random qubit channels were generated independently
  and confidence band corresponding to the confidence level 
  $99.995\%$ ($\alpha = 5\times 10^{-5}$) was calculated by Greenwood's 
  formula \cite{Lawless}.
\begin{figure}[!ht]
\centering
\begin{subfigure}{\textwidth}
\centering
 \begin{tikzpicture}
  \draw (0,0) node {\includegraphics[scale = 0.45]{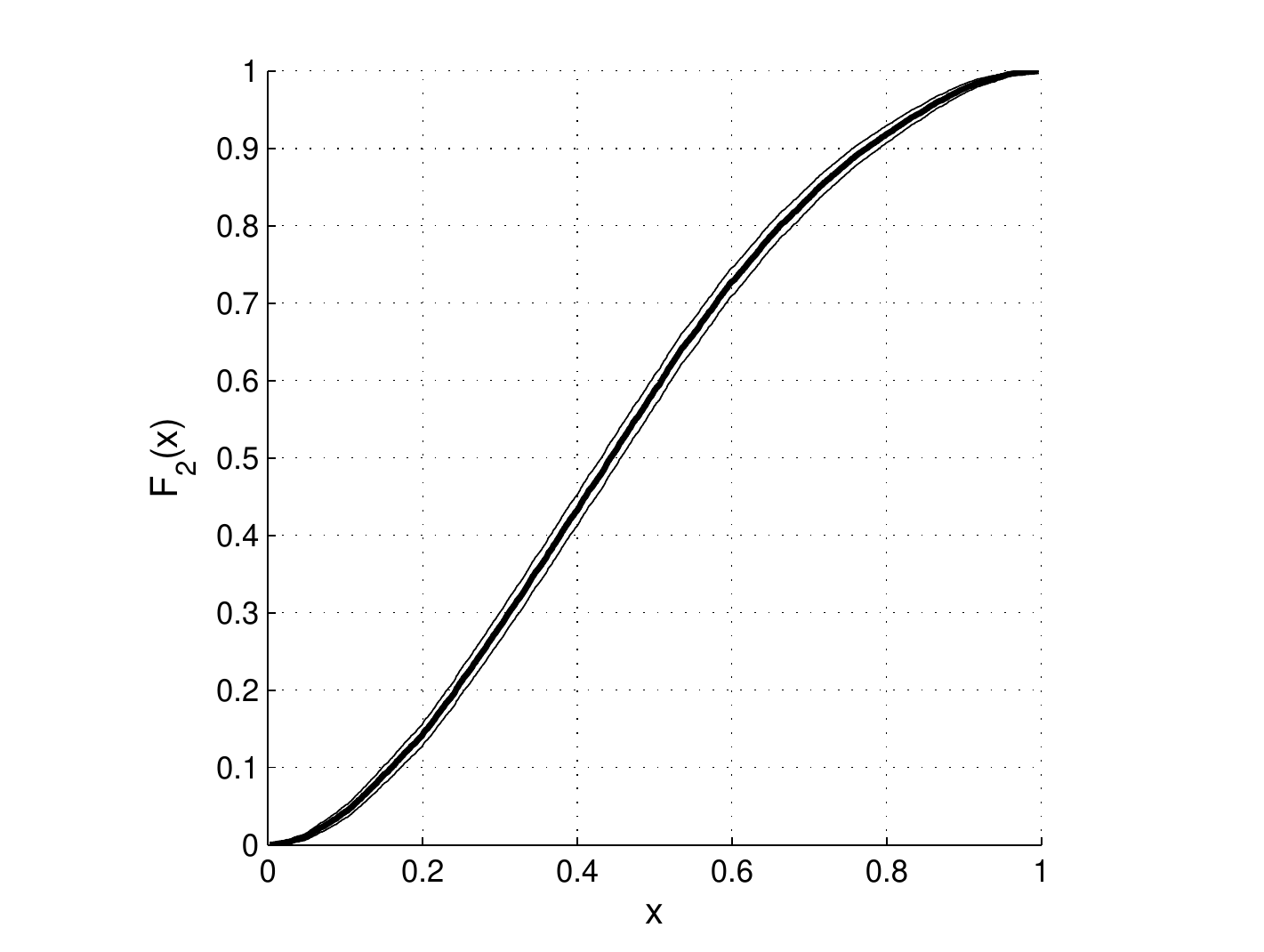}};
  \draw (6,0) node {\includegraphics[scale = 0.35]{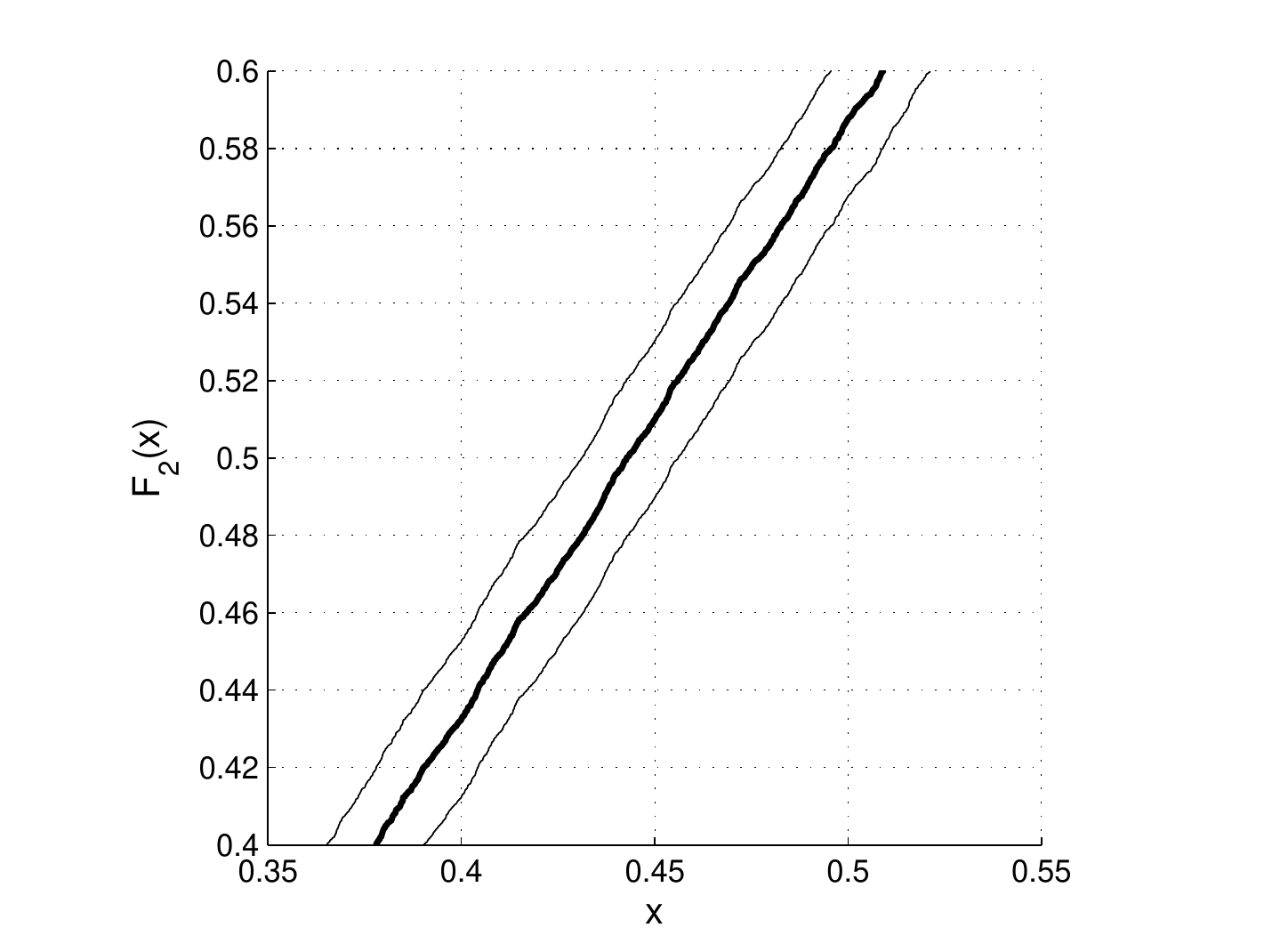}};
  \draw (-0.5,-0.3) -- (0.25,-0.3) -- (0.25,0.5) -- (-0.5,0.5) -- cycle;
  \draw[dashed] (0.25,0.5)  -- (4.25,1.5);
  \draw[dashed] (0.25,-0.3) -- (4.25,-1.4);
 \end{tikzpicture}
 \caption{Empirical CDF of $\eta^{\Tr}$ on $\Q_\C^1$.}
 \label{Fig:F2}
 \end{subfigure}
 \\
 \begin{subfigure}{\textwidth}
 \centering
 \begin{tikzpicture}
  \draw (0,0) node {\includegraphics[scale = 0.45]{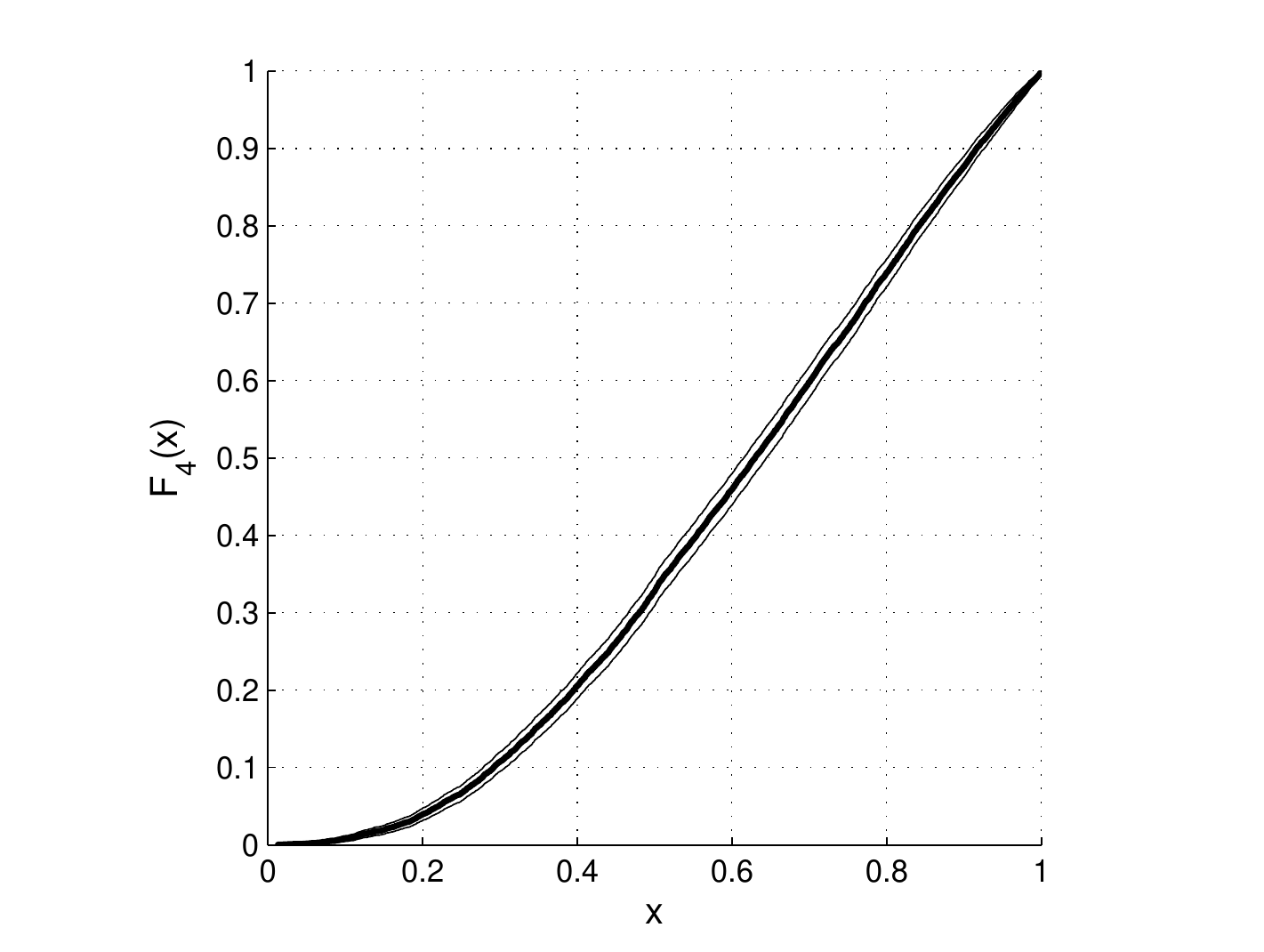}};
  \draw (6,0) node {\includegraphics[scale = 0.35]{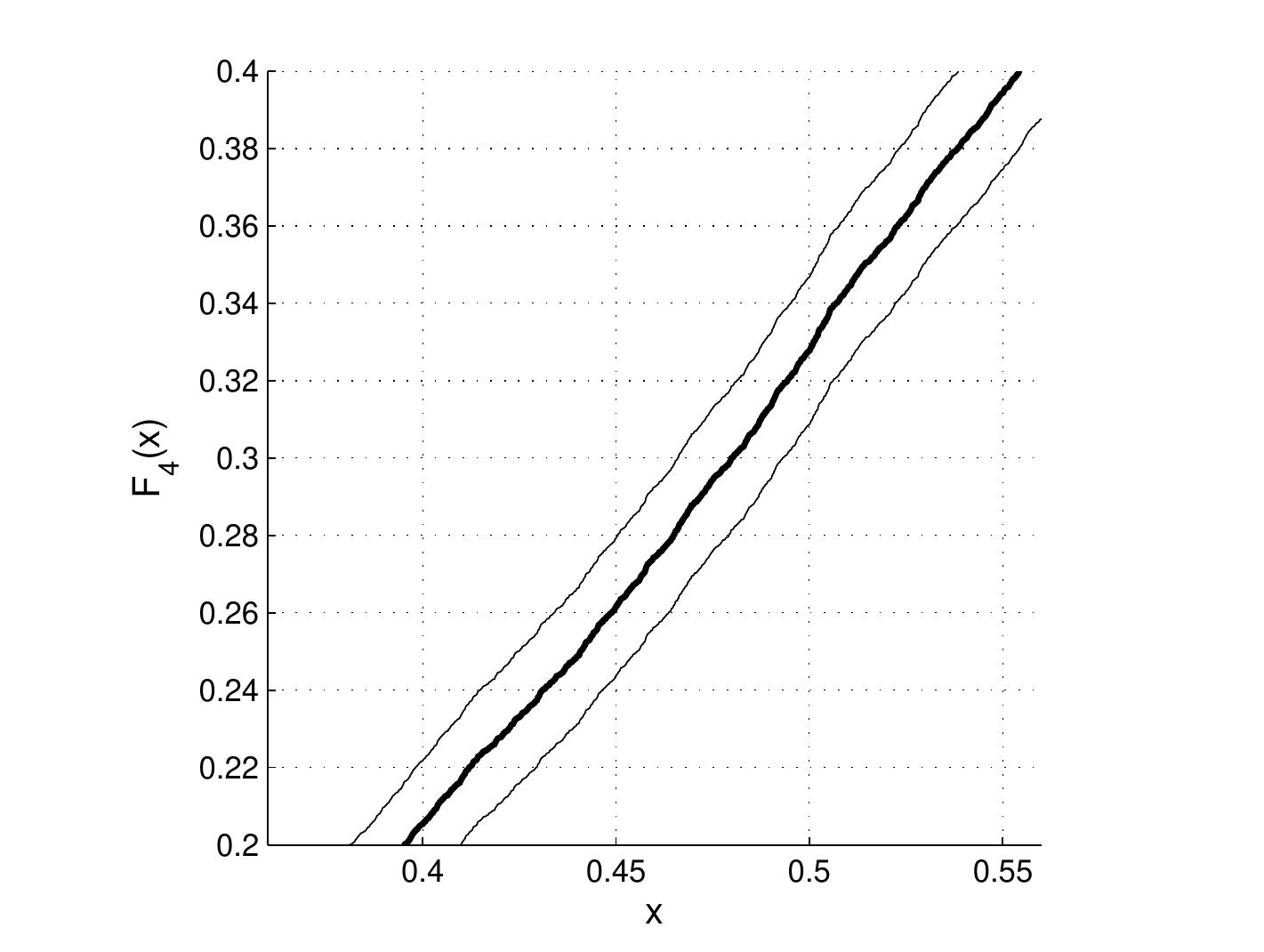}};
  \draw (-0.4,-1.1) -- (0.35,-1.1) -- (0.35,-0.3) -- (-0.4,-0.3) -- cycle;
  \draw[dashed] (0.35,-0.3) -- (4.25,1.5);
  \draw[dashed] (0.35,-1.1) -- (4.25,-1.4);
 \end{tikzpicture}
 \caption{Empirical CDF of $\eta^{\Tr}$ on $\Q_\C$.}
 \label{Fig:F4}
 \end{subfigure}
 \caption{
    Empirical CDF of $\eta^{\Tr}$ and confidence
    band ($n=10^4$, $\alpha = 5\times 10^{-5}$).
 }
 \label{Fig:F1234}
\end{figure}

\subsection{Distribution $\eta^{\Tr}$ over classical channels}

Three natural questions arise about the distribution of
  trace-distance contraction coefficient over a fixed classical channel: 
\begin{enumerate}[i.]
\item What is the supremum of $\eta^{\Tr}$ over a fixed classical channel?
\item  What is the infimum of $\eta^{\Tr}$ over a fixed classical channel?
\item  What is the typical value (the mode) of $\eta^{\Tr}$ 
 over a fixed classical channel?
\end{enumerate}

The set of qubit channels over the classical channel
$ \left(\begin{array}{cc}
        1-a & a\\
        1-f & f
       \end{array}
 \right)$
  with respect to the parametrization \eqref{eq:matQ} is denoted by
  $\Q_\R (a,f)$, $\Q_\C (a,f)$, $\Q_\R^1 (a)$ and $\Q_\C^1 (a)$.
The next Theorem answers the first question.
\begin{theorem}
Let $a,f\in [0,1]$ be arbitrary real numbers.
For all $x\in (|a-f|,\sqrt{(1-a)f}+\sqrt{a(1-f)})$ there exists 
  a qubit channel $Q\in\Q_\R (a,f)\subset \Q_\C (a,f)$ for which 
  $\eta^{\Tr}(Q) = x$.
\end{theorem}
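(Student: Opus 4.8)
The plan is to exhibit, for each target value $x$ in the open interval, an explicit real channel in $\Q_\R(a,f)$ whose contraction coefficient equals $x$, so that neither an optimization nor an intermediate-value argument is needed. First I would specialize the matrix $T$ of \eqref{eq:matT} to the real case. Since every off-diagonal parameter $b,c,d,e,g$ of \eqref{eq:matQ} is then real, all imaginary parts in \eqref{eq:matT} vanish and
\begin{equation*}
 T = \left(\begin{array}{ccc} d+e & 0 & b-g \\ 0 & d-e & 0 \\ 2c & 0 & a-f \end{array}\right).
\end{equation*}
The middle coordinate decouples, so $\eta^{\Tr}(Q)=\|T\|_\infty$ is the maximum of $|d-e|$ and the operator norm of the $2\times2$ block with rows $(d+e,\,b-g)$ and $(2c,\,a-f)$. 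To make this transparent I would restrict attention to the subfamily with $c=0$ and $b=g=0$, for which $T=\dg(d+e,\,d-e,\,a-f)$ and hence $\eta^{\Tr}(Q)=\max\{|d+e|,|d-e|,|a-f|\}$.

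Next I would pin down the positivity constraint on this subfamily. With $c=b=g=0$ the Choi matrix \eqref{eq:matQ} becomes block diagonal after the index permutation $(1,4,2,3)$, splitting into $\left(\begin{smallmatrix} a & d \\ d & 1-f\end{smallmatrix}\right)$ and $\left(\begin{smallmatrix} 1-a & e \\ e & f\end{smallmatrix}\right)$. Thus $Q>0$ is equivalent to $a,f\in(0,1)$ together with $d^2<a(1-f)$ and $e^2<(1-a)f$, and the attainable values of $d+e$ with $d,e\ge 0$ fill exactly $[0,\sqrt{a(1-f)}+\sqrt{(1-a)f})$, whose right endpoint is the stated supremum, call it $s$. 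Given $x\in(|a-f|,s)$ I would then choose the proportional split $d=x\sqrt{a(1-f)}/s$ and $e=x\sqrt{(1-a)f}/s$, so that $d+e=x$ while $d^2<a(1-f)$ and $e^2<(1-a)f$ hold strictly (the strictness follows precisely from $x<s$). Since $d,e\ge 0$ give $|d-e|\le d+e=x$ and the hypothesis $x>|a-f|$ gives $|a-f|<x$, the three candidate singular values satisfy $\max\{x,|d-e|,|a-f|\}=x$, whence $\eta^{\Tr}(Q)=x$. The constructed $Q$ keeps the prescribed diagonal $(a,1-a,f,1-f)$, so $Q\in\Q_\R(a,f)\subset\Q_\C(a,f)$ as required.

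The remaining items are largely bookkeeping. When $a$ or $f$ lies in $\{0,1\}$ the diagonal of $Q$ contains a zero, so $\Q_\R(a,f)$ contains no positive definite matrix and the assertion is vacuous; for $a,f\in(0,1)$ the construction above settles every $x$ in the open interval. The openness is natural: the lower value $|a-f|$ is realised by the diagonal channel $d=e=0$, while as $x\uparrow s$ the split forces the two off-diagonal blocks to degenerate, so $s$ is reached only on the positive semidefinite boundary. I expect the only place genuinely demanding care to be the claim that the largest singular value of the constructed $T$ is exactly $x$: one must check both that the decoupled entry $|d-e|\le d+e=x$ (using $d,e\ge0$) and that the fixed entry $|a-f|<x$ (using $x>|a-f|$). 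If, beyond this existence statement, one also wanted to prove that $s$ is the exact supremum of $\eta^{\Tr}$ over $\Q_\R(a,f)$, that would be the harder task, since it requires bounding $\sigma_{\max}(T)$ over all real channels, including those with $c\ne0$ and $b\ne g$, rather than only over the diagonal-$T$ family employed here.
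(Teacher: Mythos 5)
Your proposal is correct and follows essentially the same route as the paper: both restrict to the subfamily $b=c=g=0$ with real $d,e$, reduce positivity of the Choi matrix to the constraints $d^2\le a(1-f)$ and $e^2\le (1-a)f$, and read off $\eta^{\Tr}(Q)=\max\{|d+e|,|d-e|,|a-f|\}$ from the matrix $T$ in \eqref{eq:matT}. Your explicit proportional choice $d=x\sqrt{a(1-f)}/s$, $e=x\sqrt{(1-a)f}/s$ (with the strictness check coming from $x<s$) in fact fills in a detail the paper leaves implicit, since the paper's proof only records the upper bound $|d\pm e|\le\sqrt{(1-a)f}+\sqrt{a(1-f)}$ and does not exhibit the channel attaining a prescribed $x$.
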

\begin{proof}
Let $a,f\in [0,1]$ and $x\in \left(|a-f|,\sqrt{(1-a)f}+\sqrt{a(1-f)}\right)$
  be arbitrary. 
Consider the following qubit channel 
\begin{equation*}
Q=\left(
\begin{array}{cccc}
 a &   0 & 0 &   d \\
 0 & 1-a & e &   0 \\
 0 &   e & f &   0 \\
 d &   0 & 0 & 1-f
\end{array}
\right),   
\end{equation*}
  where $d,e\in\R$.
In order to guarantee the positivity ot the matrix above, the
  following constrains must be held.
\begin{align*}
 e^2 &\le (1-a)f \\
 d^2 &\le a(1-f) 
\end{align*}
According to \eqref{eq:matT}, 
  $\eta^{\Tr}(Q)=\max \left(|d+e|,|d-e|,|a-f|\right)$, 
  where $|d\pm e|\le |d|+|e|\le \sqrt{(1-a)f}+\sqrt{a(1-f)}$
  which completes the proof.
\end{proof}

\begin{corollary}
For unital channels $f=1-a$ hence the supremum of $\eta^{\Tr} (Q)$
  on the set $\Q_\R^1\subset\Q_\C^1$ is equal to $\sqrt{(1-a)^2}+\sqrt{a^2}=1$
  which means that the theoretical upper bound of $\eta$ can be 
  reached over any classical channel.
\end{corollary}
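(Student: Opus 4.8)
The plan is to read the conclusion off from the preceding Theorem by imposing the unitality constraint, which amounts to setting $f=1-a$. First I would verify that a unital channel over the classical channel indexed by $a$ is exactly a general channel over the classical channel indexed by the pair $(a,1-a)$: comparing the unital parametrization \eqref{eq:matQ2} with the general one \eqref{eq:matQ}, the requirement $Q(I)=I$ forces the lower block $Q_{22}$ to carry the diagonal $(1-a,a)$, that is $f=1-a$ (together with $g=-b$). Hence $\Q_\R^1(a)\subseteq\Q_\R(a,1-a)$, and every unital channel falls under the scope of the Theorem.

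Next I would check that the explicit extremal family constructed in the proof of the Theorem can be chosen to lie inside $\Q_\R^1(a)$, not merely inside $\Q_\R(a,1-a)$. Taking $b=c=g=0$ and $f=1-a$ in that construction leaves the matrix in precisely the unital form \eqref{eq:matQ2} with $b=c=0$, so the resulting channels are genuinely unital; the positivity constraints $e^2\le(1-a)f$ and $d^2\le a(1-f)$ then reduce to $|e|\le 1-a$ and $|d|\le a$.

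The decisive step is the arithmetic collapse of the upper endpoint of the achievability interval. Substituting $f=1-a$ yields
\begin{equation*}
\sqrt{(1-a)f}+\sqrt{a(1-f)}=\sqrt{(1-a)^2}+\sqrt{a^2}=(1-a)+a=1,
\end{equation*}
where the two radicals equal $1-a$ and $a$ because $a\in[0,1]$; simultaneously the lower endpoint $|a-f|$ becomes $|2a-1|$. Applying the Theorem, every $x\in(|2a-1|,1)$ is realised by some unital channel, whence $\sup\eta^{\Tr}\ge 1$. Coupling this with the universal bound $\eta^{\Tr}(Q)=||T||_\infty\le 1$, valid for every channel since the trace distance contracts, forces the supremum to equal $1$; as $\Q_\R^1(a)$ is open (it demands $Q>0$), the value $1$ is approached but not attained.

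I expect the only real obstacle to be the bookkeeping in the first two steps, namely confirming that the specialisation $f=1-a$ genuinely identifies the unital channels and that the extremal family from the Theorem stays within the unital slice. Once this is settled, the statement is immediate from the identity $\sqrt{(1-a)^2}+\sqrt{a^2}=1$ together with the contractivity bound $\eta^{\Tr}\le 1$.
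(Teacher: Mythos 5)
Your proposal is correct and follows essentially the same route as the paper, which treats the corollary as an immediate consequence of the preceding theorem under the substitution $f=1-a$, collapsing the upper endpoint $\sqrt{(1-a)f}+\sqrt{a(1-f)}$ to $(1-a)+a=1$. Your extra care is a welcome tightening of what the paper leaves implicit: you verify that the extremal family ($b=c=g=0$, $f=1-a$) actually lies in the unital slice $\Q_\R^1(a)$ rather than merely in $\Q_\R(a,1-a)$, and you close the argument with the universal bound $\eta^{\Tr}\le 1$ to turn achievability of all $x\in(|2a-1|,1)$ into the exact value of the supremum.
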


\begin{conjecture}\label{conj}
We conjecture that $\inf \{\eta (Q):Q\in\Q_\C (a,f)\} = |a-f|$
  which is equal to the trace-distance contraction coefficient of 
  the underlying classical channel.
\end{conjecture}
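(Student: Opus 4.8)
The plan is to sandwich the quantity between a lower bound valid for \emph{every} channel in the fibre and the value attained at one explicit, maximally decohering channel. Throughout I use the paper's own identity $\eta^{\Tr}(Q)=\|T\|_\infty$, with $T$ the real $3\times 3$ Bloch matrix of \eqref{eq:matT}. The single structural fact that drives everything is that the entry $T_{33}=a-f$ is \emph{fixed} by the underlying classical channel: the two diagonal marginals $a$ and $f$ are precisely the data defining $\Q_\C(a,f)$, whereas all other entries of $T$ depend only on the coherence parameters $b,c,d,e,g$ that remain free inside the fibre.

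First I would record the lower bound. Since $\|T\|_\infty$ dominates the modulus of each entry of $T$ (because $|T_{ij}|=|\langle e_i,Te_j\rangle|\le\|T\|_\infty$), for every $Q\in\Q_\C(a,f)$ one has
\[
\eta^{\Tr}(Q)=\|T\|_\infty\ \ge\ |T_{33}|\ =\ |a-f|,
\]
whence $\inf\{\eta^{\Tr}(Q):Q\in\Q_\C(a,f)\}\ge|a-f|$. At the same time $|a-f|$ is exactly the total-variation (Dobrushin) contraction of the binary classical channel, whose two output rows $(a,1-a)$ and $(f,1-f)$ lie at $\ell_1$-distance $2|a-f|$; this is the identification asserted in the statement.

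For the matching upper bound I would exhibit the completely dephasing channel, obtained by setting $b=c=d=e=g=0$ in \eqref{eq:matQ}, namely $Q_0=\dg(a,1-a,f,1-f)$. Its Choi matrix is diagonal with entries $a,1-a,f,1-f$, so $Q_0>0$ whenever $a,f\in(0,1)$ and hence $Q_0\in\Q_\C(a,f)$; its blocks satisfy $Q_{11}=\dg(a,1-a)$, $Q_{22}=\dg(f,1-f)$ and $Q_{12}=Q_{21}=0$, so $Q_0$ is trace preserving and carries exactly the prescribed classical restriction. Feeding $Q_0$ into \eqref{eq:matT} annihilates every off-diagonal entry and the whole upper-left $2\times 2$ block, leaving $T=\dg(0,0,a-f)$ and therefore $\eta^{\Tr}(Q_0)=|a-f|$. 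Together with the lower bound this shows the infimum is attained and equals $|a-f|$. Since the identical computation applies with real entries, and (with the unital parametrization and $f=1-a$) to $\Q_\R^1$ and $\Q_\C^1$, this also accounts for the numerical confirmation reported there.

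Two points require care, and the second is where I expect the only real friction. The harmless one is the explicit classical--quantum comparison: to justify calling $|a-f|$ the trace-distance contraction coefficient of the classical channel, I would restrict $Q_0$ to diagonal states and recompute the contraction directly, confirming the two notions agree. The genuinely delicate case is the boundary $a\in\{0,1\}$ or $f\in\{0,1\}$, where $Q_0$ is only positive semidefinite and so leaves the open set $\Q_\C(a,f)$; there the statement must be read as an infimum approached along strictly positive channels degenerating to $Q_0$, and one must verify that the prescribed marginals can be maintained along such a sequence while the remaining principal minors stay positive. This boundary analysis, rather than the interior construction, is where I would concentrate the technical effort.
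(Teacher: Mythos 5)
The paper contains no proof of this statement: it is presented as Conjecture~\ref{conj} and supported only by Monte--Carlo evidence for unital channels, so your argument cannot be compared with a proof in the paper --- it goes strictly beyond it, and as far as I can check it actually settles the conjecture. Granting the two facts the paper itself supplies, namely the identity $\eta^{\Tr}(Q)=||T||_\infty$ and the explicit Bloch matrix \eqref{eq:matT}, both halves of your sandwich are sound. For the lower bound, $|T_{33}|=|\left<e_3,Te_3\right>|\le ||Te_3||\le ||T||_\infty$ and $T_{33}=a-f$ is determined by the fibre $\Q_\C(a,f)$ alone, so every channel in the fibre has $\eta^{\Tr}(Q)\ge |a-f|$. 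For the upper bound, the dephasing Choi matrix $Q_0=\dg (a,1-a,f,1-f)$ has the prescribed diagonal blocks, vanishing off-diagonal blocks (so it is trace preserving with the correct classical restriction), is positive definite for $a,f\in (0,1)$, and \eqref{eq:matT} collapses to $T=\dg (0,0,a-f)$, giving $\eta^{\Tr}(Q_0)=|a-f|$. Since $|a-f|$ is precisely the Dobrushin coefficient of the $2\times 2$ row-stochastic matrix (half the $\ell_1$-distance of its rows), the infimum equals $|a-f|$ and is in fact attained --- a stronger statement than the conjecture. Your closing observation that the same $Q_0$ is unital when $f=1-a$ also explains why the dotted curve in Figure~\ref{Fig:etaRC} coincides with $|2a-1|$.

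The one place your write-up goes astray is the boundary discussion, though it does not damage the main argument. If $a\in\{0,1\}$ or $f\in\{0,1\}$, then the Choi matrix has a zero diagonal entry, and no positive definite matrix can have one; under the paper's convention $Q>0$ the fibre $\Q_\C(a,f)$ is therefore \emph{empty}, so there is no sequence of strictly positive channels with those marginals that could degenerate to $Q_0$, and the conjecture is simply vacuous there. The ``technical effort'' you propose to spend on the boundary has nothing to act on; the interior construction is the entire proof. (If one instead relaxes the fibre to $Q\ge 0$, then $Q_0$ itself belongs to it and the identical two-line argument applies verbatim, again with no limiting procedure.)
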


It seems that there is no chance to give explicit formula for the mode of  
  $\eta^{\Tr}$ over a fixed classical channel.
Instead of this, Monte-Carlo simulations were done for the case of unital 
  channels. 
The interval $[0,1]$ was divided into $100$ equidistant parts.
Infimum, expectation and mode was estimated from a sample of size $n=1000$ in 
  each point.
Infimum of $\eta^{\Tr}$ in $a\in [0,1]$ was estimated by the following formula.
\begin{equation*}
\inf \{\eta^{\Tr}(Q):Q\in\Q^1(a)\}\approx\min (|2a-1|,
\text{smallest element in the sample})
\end{equation*}
Smoothed density histogram was applied to estimate the mode.
Confidence band corresponding to the confidence level $99.995\%$ 
  ($\alpha = 5\times 10^{-5}$) was calculated for the expected value.
\begin{figure}[!ht]
    \centering
    \begin{subfigure}{0.48\textwidth}
        \centering
        \includegraphics[width=\textwidth]{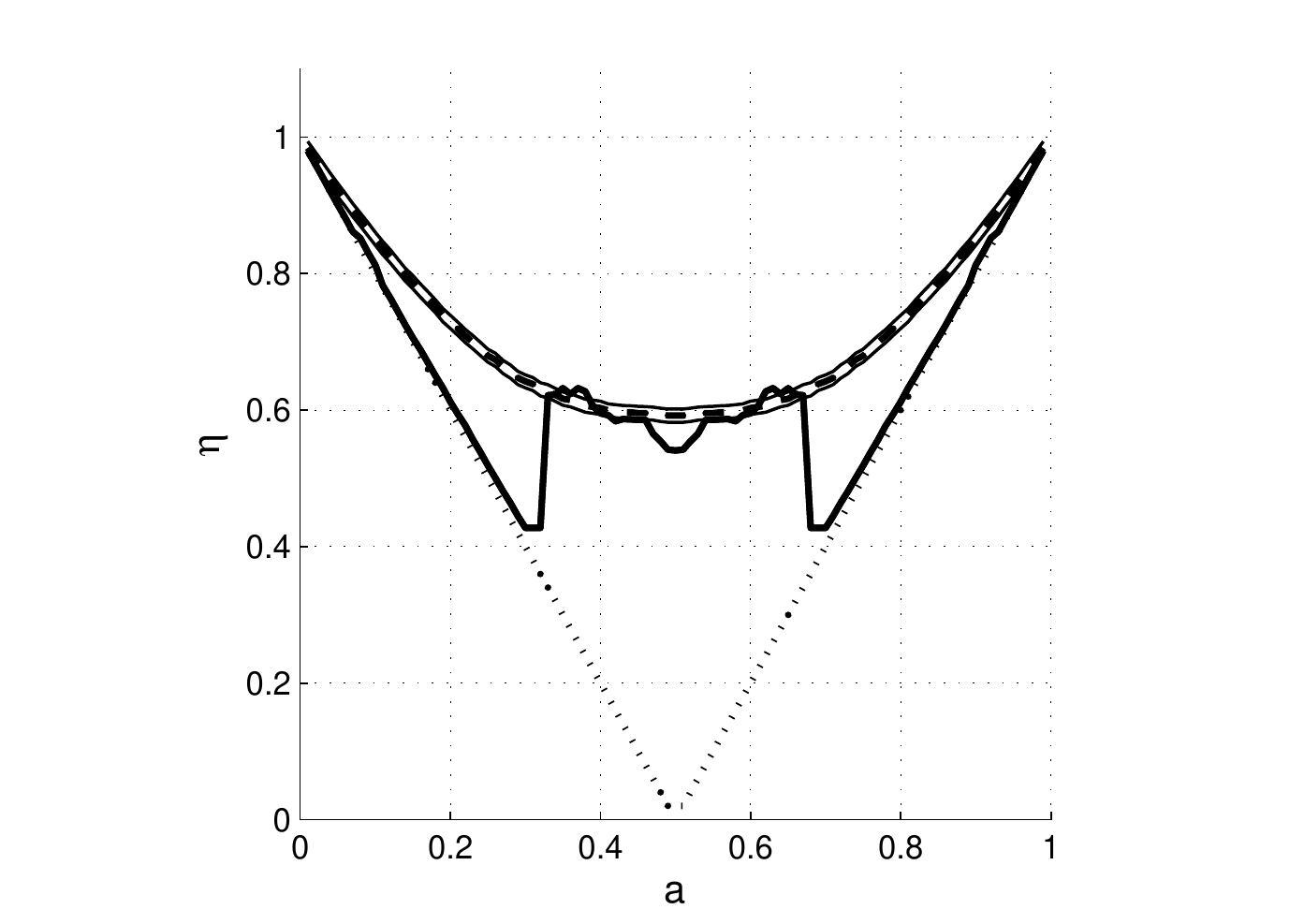}
        \caption{$\Q_\R^1$}
        \label{Fig:etaR}
    \end{subfigure}
    \hfill
    \begin{subfigure}{0.48\textwidth}
        \centering
        \includegraphics[width=\textwidth]{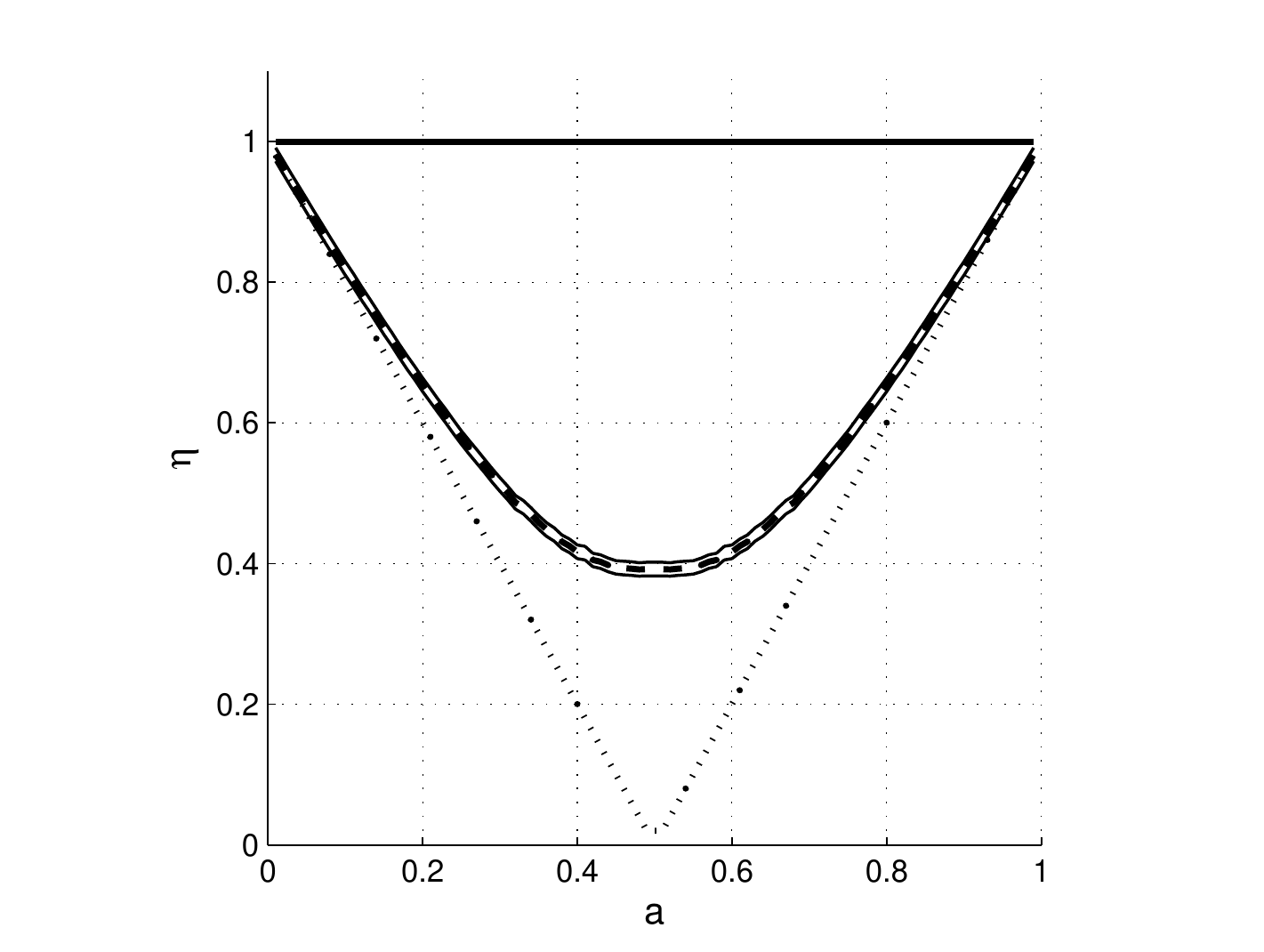}
        \caption{$\Q_\C^1$}
        \label{Fig:etaC}
    \end{subfigure}
    \caption{Minimal value (dotted) of $\eta^{\Tr}$, 
             mode of $\eta^{\Tr}$ (thick),
             expectation of $\eta^{\Tr}$ (dashed) and 
             confidence band (solid) corresponding 
             to the expectation ($n=1000$, $\alpha = 5\times 10^{-5}$).}
    \label{Fig:etaRC}
\end{figure}
The estimated infimum of $\eta^{\Tr}$ is displayed by dotted line in
  Figure \ref{Fig:etaRC}.
We can see that the estimated infimum coincides with the trace-distance 
  contraction coefficient of the underlying classical channel which confirms 
  Conjecture \ref{conj} for unital channels.
The mode shows irregular behaviour in case of real unital channels
  (Figure \ref{Fig:etaR}).
Small deviations of mode from infimum can be observed near $a\approx 0.1$ and
  $a\approx 0.9$ and the distribution of $\eta^{\Tr}$ changes dramatically 
  near $a\approx 0.33$ and $a\approx 0.67$.
We can see in Figure \ref{Fig:etaC} that qubit channels over the complex field 
  are condensed near the extremal $\eta^{\Tr} = 1$ isosurface. 


\end{document}